\documentclass[preprint,11pt]{elsarticle}

\usepackage{amssymb,amsmath}
\newtheorem{definition}{Definition}
\newtheorem{lemma}{Lemma}
\newtheorem{proposition}{Proposition}
\newtheorem{theorem}{Theorem}

\newtheorem{remark}{Remark}
\newtheorem{algorithm_logic}{Algorithm}
\newproof{proof}{Proof}

\journal{Communications in Nonlinear Science and Numerical Simulation Supports}

\usepackage{booktabs}
\usepackage[hidelinks]{hyperref}
\usepackage{tikz}
\usetikzlibrary{shapes.geometric, arrows.meta, positioning, calc, shadows}

\begin{document}

\begin{frontmatter}

\title{$N$-Player Binary Games with Unidirectional Dependencies: Cycle
  Stability and Induced Indifference}

%Authors, affiliations address.
\author{
  Sánchez-Sáez, José-María,
  Odishelidze, Nana and
  Criado-Aldeanueva, Francisco\corref{cor1}
}
\cortext[cor1]{fcaldeanueva@ctima.uma.es}
\affiliation{organization={Department of Didactics of Mathematics, Faculty of
    Education, University of Malaga},%Department and Organization
            addressline={Bulevar Louis Pasteur, 25}, 
            city={Malaga},
            postcode={29010}, 
            state={Andalucía},
            country={Spain}}

\affiliation{organization={Department of Computer Sciences, Faculty of Exact
    and Natural Sciences, Iv. Javakhishvili Tbilisi State University},%Department and Organization
            addressline={1 Ilia Tchavtchavadze Avenue, Academic Building I}, 
            city={Tbilisi},
            postcode={0179}, 
            state={Tbilisi},
            country={Georgia}}

\affiliation{organization={Department of Applied Physics II, Polytechnic
    School, University of Malaga},%Department and Organization
            addressline={c/ Dr. Ortiz Ramos s/n}, 
            city={Malaga},
            postcode={29071}, 
            state={Andalucía},
            country={Spain}}
  %
%  \email{jmss@uma.es [J.M. Sánchez]; nana\_georgiana@yahoo.com
%    [N. Odishelidze]; fcriado@uma.es [F. Criado-Aldenaueva]}}

\begin{abstract}
  We characterize Nash equilibria in $N$-player non-cooperative games
  governed by strict unidirectional local dependencies and binary
  strategies. Addressing the PPAD-completeness of general network
  games, we exploit the dyadic modularity of the cycle to derive a
  deterministic, closed-form solution in linear time ($O(N)$). We
  prove that active non-zero boundary incentives effectively linearize
  the topology into a feed-forward deterministic propagation. In
  robust regimes, strict dominance guarantees a unique equilibrium; in
  its absence, the existence of pure strategy equilibria is strictly
  governed by the \textit{Parity Condition} of differentiating
  agents—yielding exactly two or zero profiles—while a unique fully
  mixed equilibrium is universally guaranteed via induced payoff
  indifference. Furthermore, we extend this framework to topological
  singularities, demonstrating that structural indifference decouples
  the cycle into tractable linear segments, while boundary
  indifference requires a bifurcation analysis to resolve conditional
  dependencies. This analytical transparency enables the inverse
  design of inventory policies in circular supply chains, identifying
  the precise local parameters required to stabilize global
  synchronization.
\end{abstract}

\begin{highlights}
\item Unidirectional dependencies reduce equilibrium search to linear
  time $O(N)$.
\item Topological decouplers linearize the cycle into deterministic
  segments.
\item A hierarchical algorithm resolves equilibria across all
  incentive regimes.
\item Nodes are characterized as entropic sources or deterministic
  propagators.
\item The framework enables inverse analysis of inventory policies in
  supply chains.
\end{highlights}

\begin{keyword}  
  Game Theory, Nash equilibrium, Algorithmic game theory, Cyclic
  games, Network games

  \MSC 91A10\sep 91A05\sep 91A06
\end{keyword}
\end{frontmatter}

\section{Introduction}
\label{sec:intro}

Sequential decision propagation governs the dynamics of critical
structures in Operations Research, from multi-echelon supply chains
\cite{CachonNetessine2004} to decentralized manufacturing rings
\cite{li2025strategic}. In these systems, stability is strictly a
function of local unilateral dependencies between adjacent
agents. While network game theory has characterized equilibrium
conditions under general topologies using spectral properties of
adjacency matrices \cite{acemoglu2015networks} or assumptions of
supermodularity \cite{goyal2007connections}, these approaches often
fail to provide operational insights for non-potential,
non-cooperative interactions.

A fundamental barrier in this domain is computational complexity. It
is established that computing Nash equilibria in general $N$-player
games is PPAD-complete \cite{Chen2009,Daskalakis2009}. This
intractability precludes the application of generic solvers to
large-scale operational networks, as they fail to scale polynomially
with system size. Furthermore, numerical methods (e.g., Lemke-Howson
or gradient descent) treat the system as a ``black box,'' generating
equilibrium vectors without revealing the structural causality
required for system optimization.

We address this challenge by developing a rigorous analytical
framework for $N$-player binary games with strict unidirectional
dependencies. We prove that the specific modularity of the dyadic
cycle collapses the complexity class of the problem, allowing for a
reduction from exponential to a deterministic linear-time solution
($O(N)$). Unlike models relying on probabilistic stability or
asymptotic learning \cite{bala1998learning}, we focus on the static
structural properties of Nash equilibria, exploiting the linearity of
local payoff functions to derive deterministic propagation rules.

Our closed-form characterization opens the ``white box'' of
causality. By explicitly mapping how local incentives and indifference
propagate, our framework enables \emph{Inverse System Design}:
identifying exactly which local parameters must be tuned to engineer a
specific global outcome—such as synchronizing inventory waves in a
circular supply chain—a task that remains opaque to spectral or
gradient-based numerical methods.

Our contribution is threefold:
\begin{enumerate}
\item We establish the mechanism of \emph{Deterministic
Propagation}. We prove that if a player's pure strategy encounters a
  non-zero marginal incentive in the subsequent player (active
  boundary condition), it triggers a deterministic chain
  reaction. This mechanism effectively linearizes the cyclic topology
  into a sequential dependency path, eliminating strategic uncertainty
  along the propagation chain.
    
\item We provide a complete characterization of equilibria in
  \emph{Robust Systems} (defined by strictly non-zero boundary
  incentives). We prove that:
  \begin{itemize}
  \item The presence of a dominant strategy guarantees a \emph{unique}
    Nash equilibrium, consisting entirely of pure strategies.
  \item In the absence of dominance, the existence of pure strategy
    equilibria is strictly governed by the \emph{Parity Condition} of
    differentiating players: even cycles support exactly two pure
    equilibria, while odd cycles support none.
  \item Regardless of parity, these regimes universally guarantee the
    existence of a \emph{unique} fully mixed strategy equilibrium.
  \end{itemize}
  
\item We develop a hierarchical taxonomy for \emph{Topological
Singularities}. We identify \emph{Dominance} and \emph{Structural
Indifference} as unconditional decouplers that resolve the cycle into
  linear segments. Furthermore, for cases of \emph{Boundary
  Indifference}, we introduce a bifurcation algorithm that resolves
  conditional dependencies. This perspective categorizes the
  equilibrium search into deterministic propagation and hypothesis
  verification, facilitating the algorithmic resolution of non-robust
  systems.
\end{enumerate}

\bigskip

The remainder of the paper is structured as follows: Section
\ref{sec:model} formulates the model. Sections \ref{sec:incentives}
through \ref{sec:mixed_strategies} derive the theoretical conditions
for pure and mixed strategies. Section \ref{sec:taxonomy} synthesizes
the taxonomy and the resolution algorithm. Section
\ref{sec:applications} validates the framework through Supply Chain
applications, demonstrating the inverse design of inventory policies,
before concluding in Section \ref{sec:conclusions}.

\section{Model Formulation}\label{sec:model}

We define a finite non-cooperative game in normal form
\begin{equation*}
  \Gamma=\langle I, \{S_i\}_{i\in I},\{H_i\}_{i\in I} \rangle.
\end{equation*}
The set of players is denoted by $I=\{1,\ldots,N\}$, indexed modulo
$N$ (implying $0 \equiv N$ and $N+1 \equiv 1$). We restrict our
analysis to binary strategy spaces, where each player $i$ selects a
pure strategy $s_i \in S_i=\{0,1\}$. The global strategy profile is
denoted by $s = (s_1, \dots, s_N) \in S = \prod_{i\in I} S_i$.

The interaction topology is strictly governed by a directed cycle
graph $C_N$ (Figure \ref{fig:topology}). This imposes a
\textit{unidirectional dependency constraint}: the payoff of player
$i$ is a function strictly of their own strategy $s_i$ and the
strategy of the immediate predecessor $s_{i-1}$.

\begin{figure}[htbp]
  \centering
  \begin{tikzpicture}[
      scale=0.6,
      player/.style={circle, draw=black, thick, fill=white, minimum size=1cm, inner sep=0pt, font=\bfseries},
      dependency/.style={->, >={Latex[length=3mm, width=2mm]}, thick, shorten >=2pt, shorten <=2pt},
      label text/.style={font=\footnotesize, align=center}
    ]
    
    % Definir radio y nodos principales
    \def\R{3.5cm}
    
    % Nodos estratégicos
    \node[player] (i) at (180:\R) {$i$};
    \node[player] (im1) at (240:\R) {$i-1$};
    \node[player] (ip1) at (110:\R) {$i+1$};

    %\draw[dashed] (0,0) circle (\R);
    
    % Nodos de contexto (puntos suspensivos)
    \node[player, dashed, draw=gray] (n) at (0:\R) {$N$};
    \node[player, dashed, draw=gray] (1) at (300:\R) {$1$};
    
    % Arcos de dependencia (Sentido horario o antihorario según tu convención modulo n)
    % Asumiendo i responde a i-1 (flujo de información)
    \draw[dependency] (im1) to[bend left=20] node[midway, left] {$s_{i-1}$} (i);
    \draw[dependency] (i) to[bend left=20] node[midway, left] {$s_{i}$} (ip1);
    \draw[dependency] (1) to[bend left=20] (im1);
    \draw[dependency] (n) to[bend left=20] (1);
    
    % Puntos suspensivos para completar el ciclo
    \draw[dashed, gray, thick] (ip1) to[bend left=15] (60:\R);
    \draw[dashed, gray, thick] (60:\R) to[bend left=20] (n);

    % Anotación de Función de Pago (La clave del modelo)
    \node[left=0cm of i, text width=3.5cm, font=\small] (payoff) {
        \textbf{Local Payoff:}\\
        $H_i(s_{i-1}, s_i)$ \\
        \textit{Dependence is strictly unidirectional}
    };
    \draw[thin, gray] (payoff.east) -- (i.west);

    % Anotación de Estrategia Binaria
    %\node[above=0.5cm of i, font=\footnotesize] {Choosing $s_i \in \{0,1\}$};
  \end{tikzpicture}
  \caption{\textbf{Directed Cycle Topology.} The diagram illustrates
    the $N$-player game structure where player $i$'s payoff is
    strictly determined by their own strategy and the predecessor's
    strategy $s_{i-1}$. This unidirectional coupling allows for
    sequential propagation analysis.}
\label{fig:topology}
\end{figure}

\begin{definition}[Local Payoff Structure]
  The global payoff function $H_i: S \to \mathbb{R}$ is structurally
  restricted to the local mapping $H_i(s_{i-1}, s_i)$. The strategic
  incentives for player $i$ are fully characterized by the payoff
  matrix $\mathbf{H}_i \in \mathbb{R}^{2\times 2}$:
  \begin{equation}
    \mathbf{H}_i = \begin{pmatrix}
      H_i(0, 0) & H_i(0, 1)\\
      H_i(1, 0) & H_i(1, 1)
    \end{pmatrix}
    \label{eq:payoff_matrix}
  \end{equation}
  where rows correspond to the input state $s_{i-1} \in \{0,1\}$ and
  columns to the decision $s_i \in \{0,1\}$.
\end{definition}

This formulation unifies cyclic and linear topologies. A
\textit{Chain} structure is mathematically formalized as a Cycle where
player 1 exhibits \textit{Structural Indifference} to player $n$
(i.e., $H_1(0, s_1) = H_1(1, s_1)$ for all $s_1$). In this case,
player 1 acts as an exogenous source node.

A pure strategy profile $s^* \in S$ constitutes a Nash equilibrium
\cite{nash1951} if no player can improve their payoff by unilateral
deviation:
\begin{equation}
  H_i(s^*_{i-1},s^*_i) \geq H_i(s^*_{i-1}, s_i), \quad \forall s_i\in S_i, \quad \forall i\in I.
  \label{eq:pure_nash}
\end{equation}

We extend the analysis to mixed strategies $\sigma_i = (1-p_i, p_i)$,
where $p_i \in [0,1]$ denotes the probability of selecting strategy
1. By the linearity of expectation, the payoff function $H_i(p_{i-1},
p_i)$ takes the bilinear form:
\begin{equation}
  H_i(p_{i-1}, p_i) = 
  \begin{pmatrix} 1 - p_{i-1} & p_{i-1} \end{pmatrix} 
  \mathbf{H}_i 
  \begin{pmatrix} 1 - p_i \\ p_i \end{pmatrix}.
    \label{eq:mixed_payoff}
\end{equation}
A critical structural distinction arises here: unlike general $N$-player
games, where expected payoffs entail multilinear polynomials of degree
$N$, the unidirectional cycle restricts interactions to be strictly
pairwise. Consequently, the marginal utility for player $i$ is an
affine function of the single variable $p_{i-1}$, completely
decoupled from the strategies of players $j \notin \{i-1, i\}$.

\section{Strategic Incentives and Local Payoff Structure}
\label{sec:incentives}

\subsection{The Marginal Incentive Function}

The bilinear structure of the expected payoff $H_i(p_{i-1}, p_i)$
allows for a canonical decomposition separating the player's control
variable from the environmental constraint. We define the
\textit{Marginal Incentive Function} $\Delta_i: [0,1] \to \mathbb{R}$
as the net payoff gain from choosing strategy 1 over 0:
\begin{equation}
  \begin{aligned}
    H_i(p_{i-1}, p_i)
    &= (1-p_i)\cdot H_i(p_{i-1},0) + p_i \cdot H_i(p_{i-1},1)\\
    &=H_i(p_{i-1}, 0) + p_i \cdot \underbrace{\left[ H_i(p_{i-1}, 1) - H_i(p_{i-1}, 0) \right]}_{\Delta_i(p_{i-1})}.
  \end{aligned}
    \label{eq:payoff_decomposition}
\end{equation}
Since the expectation over the predecessor's strategy is linear,
$\Delta_i(p_{i-1})$ is strictly affine. It is fully determined by the
convex combination of the boundary incentives $\Delta_i(0)$ and
$\Delta_i(1)$:
\begin{equation}
  \begin{aligned}
    \Delta_i(p_{i-1})
    &= (1-p_{i-1})\Delta_i(0) + p_{i-1}\Delta_i(1).\\
    &= \Delta_i(0) + p_{i-1} (\Delta_i(1) - \Delta_i(0)).
  \end{aligned}
  \label{eq:incentive_affine}
\end{equation}
This affine property dictates that the incentive function $\Delta_i$
is monotonic: it cannot oscillate, being strictly restricted to
constant, strictly increasing, or strictly decreasing behaviors.

\subsection{Classification of Best Response Regimes}

The best response correspondence is determined solely by the sign of
$\Delta_i(p_{i-1})$. The linearity ensures that the roots of
$\Delta_i(p_{i-1}) = 0$ (points of indifference) are governed entirely
by the boundary values. We establish the following trichotomy:

\begin{lemma}[Incentive Characterization]
  \label{lem:incentive_regimes}
  For any player $i$, the strategic dependency on $i-1$ falls into one
  of three regimes:
  \begin{itemize}
  \item \emph{Dominance (Decoupling):} If $\Delta_i(0)$ and
    $\Delta_i(1)$ share the same non-zero sign (i.e.,
    $\Delta_i(0)\cdot\Delta_i(1) > 0$), then the sign of the marginal
    incentive, $\text{sgn}(\Delta_i(p_{i-1}))$, is invariant for all
    $p_{i-1} \in [0,1]$.
    
    \emph{Implication:} Player $i$ possesses a unique, strict best
    response independent of the predecessor's strategy $s_{i-1}$.

  \item \emph{Structural Indifference:} If $\Delta_i(0) = \Delta_i(1)
    = 0$, then $\Delta_i(p_{i-1}) \equiv 0$ for all $p_{i-1} \in
    [0,1]$.

    \emph{Implication:} Player $i$'s payoff is invariant to their own
    strategy choice, regardless of the predecessor's behavior.
  \item \emph{Switching Regime:} If the boundary incentives differ in
    sign or exactly one vanishes (i.e., $\Delta_i(0)\cdot\Delta_i(1)
    \leq 0$, excluding structural indifference), there exists a unique
    critical threshold $p^*_{i-1}$ defined by:
    \begin{equation}
      p^*_{i-1} = \frac{\Delta_i(0)}{\Delta_i(0) - \Delta_i(1)}.
      \label{eq:critical_threshold}
    \end{equation}
    The strategic implication is determined by the strictness of the
    boundary condition:
    \begin{itemize}
    \item \emph{Strict Sign Change ($\Delta_i(0)\cdot\Delta_i(1) <
    0$):} The threshold is strictly interior ($0 < p^*_{i-1} < 1$). It
      partitions the input space into two domains, $[0, p^*_{i-1})$
        and $(p^*_{i-1}, 1]$, where the optimal response of player $i$
      is strictly opposite. Payoff indifference is induced if and only
      if the predecessor plays exactly $p_{i-1} = p^*_{i-1}$.
    
    \item \emph{Boundary Indifference:} Exactly one boundary incentive
      vanishes. The threshold coincides with a vertex ($p^*_{i-1} \in
      \{0, 1\}$). Player $i$ is induced to indifference solely at that
      specific vertex, maintaining a strict preference throughout the
      remainder of the domain.
    \end{itemize}
  \end{itemize}
\end{lemma}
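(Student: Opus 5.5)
The whole lemma rests on the affine representation \eqref{eq:incentive_affine}, $\Delta_i(p_{i-1}) = (1-p_{i-1})\Delta_i(0) + p_{i-1}\Delta_i(1)$. I will treat it as a convex combination of the two boundary values on $[0,1]$ and split into cases according to the signs of $\Delta_i(0)$ and $\Delta_i(1)$. These cases exhaust all possibilities. Either the product is positive, or both values vanish, or the product is non-positive with at least one value non-zero. Throughout, the best-response consequences come from the decomposition \eqref{eq:payoff_decomposition}: $H_i$ is affine in $p_i$ with slope $\Delta_i(p_{i-1})$. A positive slope makes $p_i=1$ the unique best response, a negative slope makes $p_i=0$ the unique best response, and a zero slope makes every $p_i\in[0,1]$ optimal.

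\emph{Dominance.} Suppose $\Delta_i(0)\Delta_i(1)>0$, and say both values are positive (the negative case is symmetric). A convex combination of two positive numbers with weights $1-p_{i-1},p_{i-1}\ge 0$ summing to $1$ is positive. So $\operatorname{sgn}\Delta_i(p_{i-1})$ is constant on $[0,1]$, and by the slope argument the strict best response is the same pure strategy for every $p_{i-1}$.

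\emph{Structural indifference.} If both boundary values are zero, the formula gives $\Delta_i\equiv 0$. Then \eqref{eq:payoff_decomposition} gives $H_i(p_{i-1},p_i)=H_i(p_{i-1},0)$ for every $p_i$, which is the stated invariance.

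\emph{Switching regime.} Here $\Delta_i(0)\Delta_i(1)\le 0$ and not both values vanish. First I show $\Delta_i(0)\neq\Delta_i(1)$. If they were equal, their product would be $\Delta_i(0)^2\le 0$, forcing both to be $0$, which is excluded. Hence $\Delta_i$ is a non-constant affine function with slope $\Delta_i(1)-\Delta_i(0)\neq 0$, so it has exactly one real root. Solving $\Delta_i(0)+p(\Delta_i(1)-\Delta_i(0))=0$ gives \eqref{eq:critical_threshold}.

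For the strict-sign-change subcase, take $\Delta_i(0)>0>\Delta_i(1)$ (the other sign pattern is symmetric). Then $p^*=\Delta_i(0)/(\Delta_i(0)+|\Delta_i(1)|)\in(0,1)$. Strict monotonicity gives $\Delta_i>0$ on $[0,p^*)$ and $\Delta_i<0$ on $(p^*,1]$, so the unique best responses on the two intervals are opposite. Indifference, meaning $\Delta_i=0$, holds exactly at $p_{i-1}=p^*$.

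For the boundary-indifference subcase, suppose exactly one boundary value vanishes. If $\Delta_i(0)=0$, the formula gives $p^*=0$; if $\Delta_i(1)=0$, it gives $p^*=\Delta_i(0)/\Delta_i(0)=1$. Because the root is unique, $\Delta_i$ has constant non-zero sign on the rest of $[0,1]$, so the preference is strict away from that vertex.

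The only step needing care is the non-degeneracy of the denominator in \eqref{eq:critical_threshold}, which the squaring argument settles. Everything else is elementary sign bookkeeping for affine functions.
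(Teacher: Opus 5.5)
Your proof is correct and takes the same route as the paper: a case split on the signs of $\Delta_i(0)$ and $\Delta_i(1)$, using that $\Delta_i$ is affine (a convex combination of its boundary values) and that it is the slope of $H_i$ in $p_i$. The differences are minor: you justify non-constancy of $\Delta_i$ in the switching regime explicitly via $\Delta_i(0)^2\le 0$, which the paper just asserts, and you place the root in $[0,1]$ by computing it in each subcase rather than invoking the Intermediate Value Theorem.
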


\begin{proof}
  The proof follows directly from the properties of affine functions
  on a compact interval. We analyze the equation $\Delta_i(p_{i-1}) =
  0$:
  \begin{enumerate}
  \item \emph{Dominance:} If $\Delta_i(0)$ and $\Delta_i(1)$ share the
    same sign, the convex hull of these boundary values excludes
    zero. Linearity ensures that the image of the interval $[0,1]$
    under $\Delta_i$ lies strictly to one side of the origin.

    Consequently, the sign is invariant, implying that the payoff
    $H_i(p_{i-1}, p_i)$ is strictly monotonic in $p_i$, achieving its
    maximum uniquely at a boundary.

  \item \textit{Structural Indifference:} Trivial. Since the function
    is affine and vanishes at both endpoints, it is identically zero
    on the entire domain.

    Thus, player $i$'s payoff becomes independent of their strategy
    selection $p_i$.

  \item \textit{Switching Regime:} The condition
    $\Delta_i(0)\cdot\Delta_i(1) \leq 0$ implies that zero lies within
    the range $[\min(\Delta_i), \max(\Delta_i)]$. Since $\Delta_i$ is
    affine and non-constant (excluding structural indifference), it
    constitutes a monotonic bijection from $[0,1]$ to its image. By
    the Intermediate Value Theorem, there exists a unique pre-image
    $p^*_{i-1} \in [0,1]$ such that $\Delta_i(p^*_{i-1})=0$. Solving
    for $p_{i-1}$ in Equation \eqref{eq:incentive_affine} yields the
    explicit form \eqref{eq:critical_threshold}. Specifically:
    \begin{itemize}
    \item If $\Delta_i(0)\cdot\Delta_i(1) < 0$, monotonicity ensures
      the root is strictly interior.

      The marginal incentive $\Delta_i(\cdot)$ vanishes at
      $p^*_{i-1}$ and assumes opposite non-zero signs on the
      sub-intervals $[0, p^*_{i-1})$ and $(p^*_{i-1}, 1]$, thereby
      inverting the optimal strategy $p_i$ across the threshold.
    \item If a boundary vanishes, the root coincides trivially with
      that boundary.

      Thus, if $p_{i-1}=p^*_{i-1}$, player $i$'s payoff is independent
      of $p_i$. For all inputs $p_{i-1} \neq p^*_{i-1}$, the optimal
      strategy remains constant.
    \end{itemize}
  \end{enumerate}
\end{proof}

This characterization establishes a fundamental operational insight:
despite the continuous nature of the mixed strategy space $[0,1]$, the
local incentive structure inherently polarizes behavior towards the
boundaries. Unless the predecessor aligns their strategy exactly with
the critical threshold $p^*_{i-1}$ (inducing indifference), player
$i$'s optimal response is deterministically driven to an extreme point
$p_i \in \{0,1\}$. This ``digital'' response to ``analog'' inputs
serves as the primary mechanism for the propagation of pure strategies
(Section \ref{sec:pure_strategies}) and necessitates precise inverse
engineering of probabilities to sustain mixed equilibria (Section
\ref{sec:mixed_strategies}).

\section{Optimal Local Strategy}
\label{sec:local_strategy}

We now translate the incentive structures derived in Lemma
\ref{lem:incentive_regimes} into the optimal decision rules for player
$i$. In this non-cooperative framework, each agent selects $p_i \in
    [0, 1]$ to maximize the expected payoff $H_i$, conditional on the
    predecessor's strategy $p_{i-1}$.

Recall that the objective function is affine in the decision variable
(Eq. \eqref{eq:payoff_decomposition}):
\begin{equation}
    H_i(p_{i-1}, p_i) = \text{Constant} + p_i \cdot \Delta_i(p_{i-1})
\end{equation}
The optimization problem is thus entirely governed by the sign of the
Marginal Incentive Function $\Delta_i(p_{i-1})$. This allows us to map
the incentive regimes identified in Lemma \ref{lem:incentive_regimes}
directly to the Best Response Correspondence $\mathcal{B}_i$.

\begin{proposition}[Local Best Response]
  \label{prop:best_response}
  The best response correspondence $\mathcal{B}_i: [0, 1]
  \rightrightarrows [0, 1]$ is determined by the sign of the marginal
  incentive $\Delta_i(p_{i-1})$:
  \begin{equation}
    \mathcal{B}_i(p_{i-1}) = 
    \begin{cases} 
      \{1\} & \text{if } \Delta_i(p_{i-1}) > 0, \\
      \{0\} & \text{if } \Delta_i(p_{i-1}) < 0, \\
            [0, 1] & \text{if } \Delta_i(p_{i-1}) = 0.
    \end{cases}
    \label{eq:best_response_map}
  \end{equation}
\end{proposition}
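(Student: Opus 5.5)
The plan is to reduce the best response problem to maximizing an affine function of one variable over a compact interval. For fixed $p_{i-1}\in[0,1]$, the decomposition \eqref{eq:payoff_decomposition} gives
\begin{equation*}
  H_i(p_{i-1},p_i) = H_i(p_{i-1},0) + p_i\,\Delta_i(p_{i-1}).
\end{equation*}
Here the first term does not depend on $p_i$, and $\Delta_i(p_{i-1})$ is a fixed real number, which we write as $c$. So $\mathcal{B}_i(p_{i-1}) = \arg\max_{p_i\in[0,1]} p_i c$. The additive constant does not change the set of maximizers, so only the sign of $c$ matters.

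We then split into the three sign cases.
\begin{itemize}
  \item If $c>0$, then $p_i c$ is strictly increasing in $p_i$. For every $p_i<1$ we have $p_i c < c$, so $p_i=1$ is the unique maximizer and $\mathcal{B}_i=\{1\}$.
  \item If $c<0$, the function is strictly decreasing. For every $p_i>0$ we have $p_i c<0$, so $0$ is the unique maximizer and $\mathcal{B}_i=\{0\}$.
  \item If $c=0$, then $H_i(p_{i-1},\cdot)$ is constant on $[0,1]$, so every $p_i\in[0,1]$ is a maximizer and $\mathcal{B}_i=[0,1]$.
\end{itemize}
These three cases are exhaustive and mutually exclusive, which establishes \eqref{eq:best_response_map}.

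Two points need a brief remark. First, the best response is taken against mixed strategies. The bilinear form \eqref{eq:mixed_payoff} makes the expected payoff affine in $p_i$, so maximizing over $[0,1]$ is legitimate; no separate comparison with pure deviations is required. Second, the result can be tied back to Lemma~\ref{lem:incentive_regimes}. The case $c=0$ occurs exactly at $p_{i-1}=p^*_{i-1}$ in the switching regime, on all of $[0,1]$ under structural indifference, and never under dominance.

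There is no real obstacle here. The only point requiring care is the strictness argument: we must show the maximizer is unique when $c\neq 0$, so that the correspondence is single-valued there and not merely containing the endpoint.
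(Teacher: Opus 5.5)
Your proposal is correct and follows the paper's own argument: fix $p_{i-1}$, observe that $H_i(p_{i-1},\cdot)$ is affine in $p_i$ with slope $\Delta_i(p_{i-1})$, and read the best response off the sign of that slope. You get either the unique endpoint maximizer or the whole interval $[0,1]$; your explicit strict-inequality check for uniqueness when the slope is nonzero just spells out what the paper asserts in one line.
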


\begin{proof}
  The proof follows immediately from the linearity of $H_i(p_{i-1},
  \cdot)$. If the slope $\Delta_i(p_{i-1})$ is non-zero, the maximum
  on the compact interval $[0,1]$ is unique and located at the
  boundary corresponding to the slope's direction. If the slope is
  zero, the payoff is constant, implying the player indifferent across
  the entire domain.
\end{proof}

\subsection{Operational Implications: The Binary Default vs. Indifference}

This proposition establishes that equilibrium analysis in binary games
with unidirectional dependencies is fundamentally discrete, admitting
mixed strategies only under specific singular conditions. We classify
the operational behavior of node $i$ based on the interplay between
the predecessor's input and the local incentive structure (Lemma
\ref{lem:incentive_regimes}):
\begin{enumerate}
\item \emph{Deterministic Response (The Binary Default):} Whenever
  $\Delta_i(p_{i-1}) \neq 0$, the optimal strategy is strictly pure (a
  ``bang-bang'' solution). This occurs in two scenarios:
  \begin{itemize}
  \item \emph{Dominance:} If player $i$ operates in a Dominance Regime
    (Lemma \ref{lem:incentive_regimes}), the sign of $\Delta_i$ is
    invariant. The player adopts a fixed pure strategy regardless of
    $p_{i-1}$, effectively \emph{decoupling} from the predecessor.
  \item \emph{Active Propagation:} If player $i$ is in a Switching
    Regime but the input $p_{i-1}$ differs from the critical value
    $p^*_{i-1}$, the player is forced to a boundary. This mechanism
    drives the \emph{propagation} of pure strategies analyzed in
    Section \ref{sec:pure_strategies}.
    \end{itemize}

\item \emph{The Indifference Condition (Mixing Gateway):} Strictly
  mixed strategies ($p_i \in (0, 1)$) are admissible if and only if
  the marginal incentive vanishes ($\Delta_i(p_{i-1}) = 0$). This
  condition induces local indifference, rendering any strategy in
  $[0,1]$ optimal. This state arises via:
  \begin{itemize}
  \item \emph{Structural Indifference:} The node is inherently passive
    ($\Delta_i \equiv 0$).
  \item \emph{Induced Indifference:} The predecessor actively tunes
    $p_{i-1}$ to the critical threshold $p^*_{i-1}$ (Lemma
    \ref{lem:incentive_regimes}) to nullify player $i$'s incentive.
  \end{itemize}
\end{enumerate}

This dichotomy governs the subsequent analysis. Section
\ref{sec:pure_strategies} characterizes the \emph{propagation} of
deterministic behaviors, while Section \ref{sec:mixed_strategies}
formalizes the precise construction of \emph{Induced Indifference}
required to sustain mixed equilibria.

\section{Deterministic Propagation and Cycle Stability}
\label{sec:pure_strategies}

The computation of Nash equilibria in general $N$-player games is
known to be PPAD-complete \cite{Chen2009,Daskalakis2009}, and
determining the existence of pure strategy equilibria is typically
NP-hard \cite{Gottlob2005}. By contrast, the unidirectional ring
topology constrains the dependency graph, allowing us to decouple the
complex multivariate optimization into a sequential propagation
mechanism solvable in linear time.

We first establish that mixed strategies constitute degenerate
phenomena in this framework. By virtue of the linearity of the payoff
functions, strictly mixed strategies cannot exist as robust optima;
they emerge exclusively as a consequence of precise marginal
indifference.

\begin{lemma}[Structure of Equilibrium Responses]
  \label{lem:response_structure}
  Let $p^* = (p_1^*, \dots, p_N^*)$ be a Nash equilibrium profile,
  where $p_i^* \in [0,1]$ denotes the probability of selecting
  strategy 1. For every player $i$, the following disjunction holds:
  \begin{equation}
    p_i^* \in \{0, 1\} \quad \lor \quad \Delta_i(p_{i-1}^*) = 0, \qquad \forall i\in I
  \end{equation}
  Operationally, a player acts deterministically unless the marginal
  incentive vanishes. Consequently, strictly mixed strategies ($p_i^*
  \in (0, 1)$) are admissible exclusively under strict indifference.
\end{lemma}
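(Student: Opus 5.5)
The plan is to reduce the claim to Proposition~\ref{prop:best_response}, using the fact that in a Nash equilibrium each player's mixed strategy is a best response to the others. The first step is to note that, by the unidirectional dependency constraint and the bilinear form \eqref{eq:mixed_payoff}, player $i$'s expected payoff depends on the rest of the profile only through $p_{i-1}^*$. The equilibrium condition for player $i$ therefore reads
\begin{equation*}
  H_i(p_{i-1}^*, p_i^*) \geq H_i(p_{i-1}^*, p_i) \quad \forall p_i\in[0,1],
\end{equation*}
which is exactly the statement $p_i^* \in \mathcal{B}_i(p_{i-1}^*)$. No interaction with players $j\notin\{i-1,i\}$ needs to be controlled, since the marginal incentive $\Delta_i$ is a function of $p_{i-1}$ alone.

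Next I would fix an arbitrary $i\in I$ and split into two cases according to the sign of $\Delta_i(p_{i-1}^*)$. If $\Delta_i(p_{i-1}^*)\neq 0$, then Proposition~\ref{prop:best_response} gives $\mathcal{B}_i(p_{i-1}^*)=\{1\}$ or $\{0\}$. Hence $p_i^*\in\{0,1\}$, and the first disjunct holds. If $\Delta_i(p_{i-1}^*)=0$, the second disjunct holds trivially. Since $i$ was arbitrary, the disjunction holds for all $i\in I$.

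The operational consequence follows by contraposition: if $p_i^*\in(0,1)$, the first disjunct fails, so $\Delta_i(p_{i-1}^*)=0$. By Lemma~\ref{lem:incentive_regimes}, this occurs only under Structural Indifference, or when $p_{i-1}^*$ coincides with the critical threshold $p^*_{i-1}$ of \eqref{eq:critical_threshold} in the Switching Regime. Under Dominance it cannot occur, so in that regime $p_i^*$ is forced to be pure.

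I do not expect a real obstacle; the argument is essentially immediate once the best-response characterization is in hand. The one point to state carefully is the first step: Nash equilibrium in mixed strategies is equivalent to each $p_i^*$ maximizing $H_i(p_{i-1}^*,\cdot)$ over $[0,1]$. This requires checking that deviations to arbitrary mixed $p_i$, not only pure ones, are covered. That holds because $H_i(p_{i-1}^*,\cdot)$ is affine by \eqref{eq:payoff_decomposition}, so optimality against pure deviations implies optimality against all mixed deviations.
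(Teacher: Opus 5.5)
Your proposal is correct and takes essentially the paper's route: the paper reduces player $i$'s problem to maximizing the affine function $p_i\,\Delta_i(p_{i-1}^*)+\text{const}$ over $[0,1]$ and splits on the sign of the slope, which is exactly the content of Proposition~\ref{prop:best_response} that you invoke. Your extra remark that affinity makes optimality against pure deviations equivalent to optimality against mixed deviations is a small piece of care that the paper leaves implicit.
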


\begin{proof}
  Recalling the payoff decomposition
  (Eq. \ref{eq:payoff_decomposition}), the optimization problem for
  player $i$ reduces to maximizing a linear function of $p_i$:
  \begin{equation*}
    \max_{p_i \in [0,1]} \quad p_i \cdot \Delta_i(p_{i-1}^*) + \text{Constant}.
  \end{equation*}
  \begin{itemize}
  \item If $\Delta_i(p_{i-1}^*) > 0$, the slope is positive; the
    unique maximum is $p_i^* = 1$.
  \item If $\Delta_i(p_{i-1}^*) < 0$, the slope is negative; the
    unique maximum is $p_i^* = 0$.
  \item If $\Delta_i(p_{i-1}^*) = 0$, the objective function is
    constant, rendering any strategy $p_i^* \in [0, 1]$ optimal.
  \end{itemize}
  Thus, $p_i^*$ can deviate from the boundary set $\{0, 1\}$ (i.e., be
  strictly mixed) if and only if the marginal incentive term vanishes.
\end{proof}

\section{Deterministic Propagation and Cycle Stability}
\label{sec:pure_strategies}

The computation of Nash equilibria in general $N$-player games is
PPAD-complete \cite{Daskalakis2009,Chen2009}, and determining the
existence of pure strategies is NP-hard \cite{Gottlob2005}. In
contrast, the unidirectional ring topology structurally collapses this
complexity, decoupling the global multivariate optimization into a
linear sequential decision chain.

We construct the equilibrium set by tracing the forward propagation of
deterministic constraints.

\subsection{Local Incentives and Propagation}

We first characterize the admissible states of any player within a
Nash equilibrium. Building on the linearity established in Lemma
\ref{lem:response_structure}, strictly mixed strategies cannot exist
as robust optima; they appear exclusively as a consequence of
vanishing marginal incentives.

\begin{lemma}[Regime Dichotomy]
  \label{lem:dichotomy}
  Let $p^*$ be a Nash equilibrium profile. For any player $i$,
  the optimality condition imposes exactly one of the following two
  mutually exclusive regimes:
  \begin{enumerate}
  \item \emph{Strict Preference (Deterministic):} The player is \emph{compelled}
    to select a unique pure strategy $p_i^* \in \{0, 1\}$. 
  \item \emph{Indifference:} The player is \emph{unconstrained}
    regarding the strategy choice, rendering any $p_i^* \in [0, 1]$ optimal.
  \end{enumerate}
\end{lemma}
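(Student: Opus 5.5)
The plan is a trichotomy on the sign of the marginal incentive $\Delta_i(p_{i-1}^*)$, evaluated at the predecessor's equilibrium strategy, followed by a check that the three cases collapse into exactly the two regimes of the statement. Since $p^*$ is a Nash equilibrium, each $p_i^*$ must lie in $\mathcal{B}_i(p_{i-1}^*)$. By the payoff decomposition \eqref{eq:payoff_decomposition}, player $i$'s problem is to maximize $\text{Constant} + p_i\,\Delta_i(p_{i-1}^*)$ over $[0,1]$, with $p_{i-1}^*$ held fixed. Proposition \ref{prop:best_response} then gives the full best response set in each case.

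The key steps, in order:
\begin{enumerate}
\item If $\Delta_i(p_{i-1}^*) \neq 0$, the best response set is a singleton, either $\{1\}$ or $\{0\}$. Equilibrium therefore forces $p_i^*$ to equal that unique pure strategy, and any other choice is a strictly profitable deviation. This is regime 1, Strict Preference.
\item If $\Delta_i(p_{i-1}^*) = 0$, the objective is constant in $p_i$. The best response set is all of $[0,1]$, so every $p_i^*$ is optimal. This is regime 2, Indifference.
\item I will note that this case also covers both routes to a vanishing incentive identified in Lemma \ref{lem:incentive_regimes}. One is structural indifference ($\Delta_i \equiv 0$); the other is induced indifference ($p_{i-1}^* = p^*_{i-1}$ in the switching regime). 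In both, the argument only uses the value of $\Delta_i$ at the realized $p_{i-1}^*$.
\end{enumerate}

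Mutual exclusivity and exhaustiveness follow because the regimes are indexed by the dichotomy $\Delta_i(p_{i-1}^*) \neq 0$ versus $=0$. Equivalently, they are distinguished by the cardinality of $\mathcal{B}_i(p_{i-1}^*)$, which is either $1$ or the continuum. The two sets can never coincide, and every real number $\Delta_i(p_{i-1}^*)$ falls into exactly one case. This also recovers Lemma \ref{lem:response_structure} as a corollary.

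The main obstacle is interpretive rather than technical. The statement classifies the player's \emph{optimality condition}, not the realized value of $p_i^*$. An indifferent player may still happen to play a pure strategy, so the regime cannot be read off from whether $p_i^*\in\{0,1\}$. I will therefore define the regimes explicitly through $\mathcal{B}_i(p_{i-1}^*)$. That choice makes the exclusivity claim well posed and avoids apparent overlap with the pure-strategy outcomes.
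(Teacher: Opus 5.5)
Your proposal is correct and follows essentially the same route as the paper, which likewise splits on whether $\Delta_i(p^*_{i-1})$ vanishes and uses linearity in $p_i$ to get a unique boundary maximizer when it is nonzero and the whole interval $[0,1]$ when it is zero. Indexing the regimes by the best-response set $\mathcal{B}_i(p^*_{i-1})$ rather than by the realized value of $p_i^*$ is a useful clarification, since it makes the exclusivity claim well posed where the paper leaves this implicit.
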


\begin{proof}
  The optimization is linear in the decision variable $p_i$. If the
  marginal incentive $\Delta_i(p^*_{i-1})$ is non-zero, the maximum is
  unique and located at a boundary. If the marginal incentive vanishes,
  the gradient is zero, and the domain of optimality encompasses the
  entire interval $[0,1]$.
\end{proof}

This dichotomy governs the sequential transmission of constraints
through the ring. If a player adopts a pure strategy, they resolve the
strategic uncertainty for the successor, potentially triggering a
deterministic cascade.

\begin{proposition}[Deterministic Propagation]
  \label{prop:propagation}
  Suppose player $i-1$ plays a pure strategy $p_{i-1} \in \{0, 1\}$
  such that the induced marginal incentive for player $i$ is non-zero
  (i.e., $\Delta_i(p_{i-1}) \neq 0$). Then, player $i$ is compelled
  to adopt a unique pure strategy $p_i \in \{0, 1\}$.
\end{proposition}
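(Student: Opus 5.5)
The plan is to reduce the statement directly to Proposition~\ref{prop:best_response}. Once player $i-1$ has fixed a pure strategy $p_{i-1}\in\{0,1\}$, player $i$'s problem is the one-dimensional linear program of Eq.~\eqref{eq:payoff_decomposition}:
\begin{equation*}
  \max_{p_i\in[0,1]} \; H_i(p_{i-1},0) + p_i\,\Delta_i(p_{i-1}).
\end{equation*}
Here $\Delta_i(p_{i-1})$ is simply one of the two boundary incentives $\Delta_i(0)$ or $\Delta_i(1)$. By hypothesis this value is a fixed nonzero real number, so the objective is an affine function of $p_i$ with nonzero slope.

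The proof splits on the sign of that slope. If $\Delta_i(p_{i-1})>0$, the objective is strictly increasing on $[0,1]$. Its unique maximizer is then $p_i=1$, since any $p_i<1$ loses $(1-p_i)\Delta_i(p_{i-1})>0$. If $\Delta_i(p_{i-1})<0$, the symmetric argument gives the unique maximizer $p_i=0$. In both cases the best-response set $\mathcal{B}_i(p_{i-1})$ from Eq.~\eqref{eq:best_response_map} is a singleton contained in $\{0,1\}$. Equivalently, player $i$ is in the Strict Preference regime of Lemma~\ref{lem:dichotomy}, and any equilibrium with this $p_{i-1}$ must have $p_i^*$ equal to that vertex. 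This is precisely the meaning of ``compelled''.

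For completeness I would write the selected strategy explicitly as
\begin{equation*}
  p_i=\tfrac{1}{2}\bigl(1+\operatorname{sgn}\Delta_i(p_{i-1})\bigr),
\end{equation*}
and note that it depends only on the sign of the entry $H_i(p_{i-1},1)-H_i(p_{i-1},0)$ of $\mathbf{H}_i$. This makes the propagation step computable in constant time, which supports the later $O(N)$ claim.

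There is no substantive obstacle here. The only point needing care is interpretation: ``compelled'' must mean uniqueness of the best response, so that no mixing is optimal. That uniqueness is exactly what the strict sign of the slope delivers, whereas the case $\Delta_i(p_{i-1})=0$ is excluded by hypothesis.
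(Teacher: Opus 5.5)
Your proof is correct and takes the same route as the paper's. Both arguments note that a pure $p_{i-1}$ makes $\Delta_i(p_{i-1})$ equal to one of the nonzero boundary values $\Delta_i(0)$ or $\Delta_i(1)$, so player $i$'s objective is affine in $p_i$ with nonzero slope and has a unique maximizer at the vertex fixed by its sign; the paper phrases this as excluding the Indifference regime of Lemma~\ref{lem:dichotomy}, and your explicit loss term $(1-p_i)\Delta_i(p_{i-1})$ simply spells out that uniqueness step.
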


\begin{proof}
  Since $p_{i-1} \in \{0,1\}$, the marginal incentive function
  evaluates exactly to one of the boundary values, $\Delta_i(0)$ or
  $\Delta_i(1)$. By hypothesis, this value is non-zero. Applying Lemma
  \ref{lem:dichotomy}, the Indifference regime is
  precluded. Consequently, the unique optimal response is the pure
  strategy $p_i \in \{0, 1\}$ determined by the sign of
  $\Delta_i(p_{i-1})$.
\end{proof}

This dichotomy governs the sequential transmission of constraints
through the ring. If a player adopts a pure strategy, they resolve the
strategic uncertainty for the successor, potentially triggering a
deterministic cascade.

\begin{proposition}[Deterministic Propagation]
  \label{prop:propagation}
  Suppose player $i-1$ adopts a pure strategy $p_{i-1} \in \{0, 1\}$
  such that the induced marginal incentive for player $i$ is non-zero
  (i.e., $\Delta_i(p_{i-1}) \neq 0$). Then, player $i$ is compelled
  to adopt a unique pure strategy $p_i \in \{0, 1\}$.
\end{proposition}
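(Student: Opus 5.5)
The plan is to reduce the statement to the best response characterization of Proposition~\ref{prop:best_response}, evaluated at a vertex of the predecessor's strategy space. The only real content is that a pure input makes the affine incentive collapse to a boundary value, and the hypothesis says that value is non-zero.

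First, I would use the affine representation \eqref{eq:incentive_affine}, $\Delta_i(p_{i-1}) = (1-p_{i-1})\Delta_i(0) + p_{i-1}\Delta_i(1)$. Substituting $p_{i-1}\in\{0,1\}$ kills one of the two terms, so $\Delta_i(p_{i-1})$ equals exactly $\Delta_i(0)$ or $\Delta_i(1)$. These are the entries $H_i(s_{i-1},1)-H_i(s_{i-1},0)$ read off the matrix $\mathbf{H}_i$ in \eqref{eq:payoff_matrix}. By hypothesis this number is non-zero.

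Second, I would invoke the payoff decomposition \eqref{eq:payoff_decomposition}. It writes player $i$'s objective as $H_i(p_{i-1},0) + p_i\,\Delta_i(p_{i-1})$, which is affine in $p_i$ with non-zero slope. Maximizing a non-constant affine function over the compact interval $[0,1]$ gives a unique maximizer at an endpoint. This is exactly the first two cases of \eqref{eq:best_response_map}: $\mathcal{B}_i(p_{i-1})=\{1\}$ if $\Delta_i(p_{i-1})>0$, and $\{0\}$ if $\Delta_i(p_{i-1})<0$. Equivalently, Lemma~\ref{lem:dichotomy} (or Lemma~\ref{lem:response_structure}) rules out the Indifference regime, leaving only Strict Preference. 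I would then state the explicit rule $p_i = \mathbf{1}[\Delta_i(p_{i-1})>0]$, which is what makes the response \emph{deterministic} and lets it feed the next step of the chain.

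There is no genuine obstacle. The one point needing care is the meaning of ``compelled'': the best-response set must be a singleton, not merely contain a pure strategy. Uniqueness comes from the strict inequality $\Delta_i(p_{i-1})\neq 0$. If $p_i'\neq p_i$, the payoff gap is $|p_i-p_i'|\cdot|\Delta_i(p_{i-1})|>0$, so every other strategy, mixed ones included, is strictly suboptimal.
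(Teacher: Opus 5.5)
Your proposal is correct and follows essentially the same route as the paper. Like the paper, you evaluate the affine incentive at a vertex to obtain $\Delta_i(0)$ or $\Delta_i(1)$, which is non-zero by hypothesis, and then use the linearity of the payoff in $p_i$ (Lemma~\ref{lem:dichotomy}) to exclude indifference and get a unique endpoint maximizer; your payoff-gap computation $|p_i-p_i'|\,|\Delta_i(p_{i-1})|>0$ makes explicit the singleton best-response claim that the paper leaves implicit.
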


\begin{proof}
  Since $p_{i-1} \in \{0,1\}$, the marginal incentive function
  evaluates exactly to one of the boundary values, $\Delta_i(0)$ or
  $\Delta_i(1)$. By hypothesis, this value is non-zero. Applying Lemma
  \ref{lem:dichotomy}, the Indifference regime is
  precluded. Consequently, the unique optimal response is the pure
  strategy $p_i \in \{0, 1\}$ determined by the sign of
  $\Delta_i(p_{i-1})$.
\end{proof}

\subsection{Global Stability under Robust Incentives}

To characterize global equilibria, we must first isolate systems where
determinism is not lost at the boundaries due to incidental zero
gradients. We define a structural condition that precludes boundary
indifference.

\begin{definition}[Robust Incentive Structure]
  \label{def:robust_structure}
  A game possesses a \emph{Robust Incentive Structure} if, for every
  player $i \in I$, the marginal incentives at the boundaries are
  strictly non-zero:
  \begin{equation}
    \Delta_i(0) \cdot \Delta_i(1) \neq 0.
    \label{eq:robust_condition}
  \end{equation}
\end{definition}

Under this condition, the system segregates into two mutually
exclusive regimes.

\begin{theorem}[System-Wide Dichotomy]
  \label{thm:system_purity}
  In a game exhibiting a Robust Incentive Structure, the set of Nash
  equilibria is partitioned into two disjoint classes:
  \begin{enumerate}
  \item \emph{Total Purity:} Every player adopts a pure strategy ($p_i
    \in \{0, 1\}$, $\forall i\in I$).
  \item \emph{Total Indifference:} Every player adopts a
    \emph{strictly mixed} strategy ($p_i \in (0, 1)$, $\forall i\in
    I$), exhibiting payoff indifference.
  \end{enumerate}
  Hybrid profiles (coexistence of pure and mixed strategies) are
  structurally impossible.
\end{theorem}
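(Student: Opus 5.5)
The plan is to show that purity is contagious forward around the cycle: if one player is pure, every player is pure. Equivalently, if one player is strictly mixed, every player is strictly mixed. Since the two classes are defined by mutually exclusive properties, disjointness is immediate. The real content is excluding hybrid profiles, so I will argue by contradiction.

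\emph{Step 1: from a hybrid profile to an adjacent pair.} Suppose $p^*$ is a Nash equilibrium in which some player is pure and some player is strictly mixed. Walk around the directed cycle starting from a pure player. Because $I$ is finite and cyclically indexed, there must be an index $i$ with $p^*_{i-1}\in\{0,1\}$ and $p^*_i\in(0,1)$: the first strictly mixed player encountered has a pure predecessor.

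\emph{Step 2: contradiction via propagation.} Since $p^*_{i-1}$ is pure, $\Delta_i(p^*_{i-1})$ equals either $\Delta_i(0)$ or $\Delta_i(1)$. By the Robust Incentive Structure (Definition~\ref{def:robust_structure}), $\Delta_i(0)\Delta_i(1)\neq 0$, so this value is nonzero. Proposition~\ref{prop:propagation}, or equivalently Lemma~\ref{lem:response_structure}, then forces $p^*_i\in\{0,1\}$, contradicting $p^*_i\in(0,1)$. Hence any equilibrium is either entirely pure or entirely strictly mixed.

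\emph{Step 3: indifference in the mixed class.} In the all-mixed class, Lemma~\ref{lem:response_structure} gives $\Delta_i(p^*_{i-1})=0$ for every $i$, which is exactly the claimed payoff indifference. I will note in passing that each $p^*_{i-1}$ must then equal the interior threshold $p^*_{i-1}$ of Lemma~\ref{lem:incentive_regimes}. This requires every player to be in the strict-sign-change regime, since under dominance $\Delta_i$ never vanishes. Whether such an equilibrium actually exists is not part of this statement and can be left to Section~\ref{sec:mixed_strategies}.

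The only delicate point is Step~1: the cyclic indexing must guarantee a pure-to-mixed transition edge, and not just a mixed-to-pure one. This holds because both kinds of player appear on a single directed cycle, so traversing it forward from a pure player must eventually enter a mixed player.
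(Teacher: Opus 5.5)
Your proof is correct and takes essentially the same route as the paper. The paper propagates purity forward from a pure player $k$ to $k+1, k+2, \dots$ by finite induction, using the Robust Incentive Structure and the Deterministic Propagation proposition, and then invokes the dichotomy lemma for indifference; you compress that induction into a single contradiction at the first pure-to-mixed edge, which is the same argument in minimal-counterexample form, and your side remark that an all-mixed equilibrium forces every player into the strict sign-change regime is also correct.
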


\begin{proof}
  The proof proceeds by contradiction. Assume a hybrid equilibrium
  exists, containing at least one player $k$ employing a pure strategy
  ($p_k \in \{0, 1\}$).
  
  By the \emph{Robust Incentive Structure} (Definition
  \ref{def:robust_structure}), the boundary incentives are strictly
  non-zero ($\Delta_{k+1}(0) \neq 0$ and $\Delta_{k+1}(1) \neq
  0$). Consequently, the pure choice of player $k$ induces a non-zero
  marginal incentive for player $k+1$. By Proposition
  \ref{prop:propagation}, player $k+1$ is compelled to adopt a unique
  pure strategy.
  
  By finite induction, this deterministic constraint propagates
  sequentially through the cycle ($k \to k+1 \to \dots \to N \to 1\to
  k-1$), forcing every player to adopt a pure strategy. This
  contradicts the assumption of a hybrid profile.
  
  Therefore, the only alternative to a pure profile is one where
  \emph{no} player plays a pure strategy, implying $p_i \in (0, 1)$
  for all $i \in I$. By Lemma \ref{lem:dichotomy}, strict mixing
  necessitates indifference, confirming the second case.
\end{proof}

\subsection{Characterization of Pure Equilibria}

We now focus on the existence and uniqueness of \textit{Total Purity}
equilibria. We classify local player behavior based on the response to
the predecessor's pure strategy.

\begin{definition}[Strategic Taxonomy]
  \label{def:taxonomy}
  Under Robust Incentives, player $i$ belongs to one of three classes:
  \begin{enumerate}
  \item \emph{Dominant:} The optimal strategy $p_i$ is invariant to
    the predecessor's state $p_{i-1}$.
    \begin{equation*}
      \Delta_i(0)\cdot \Delta_i(1) > 0
    \end{equation*}
  \item \emph{Conformist:} The player replicates the predecessor's
    strategy ($0 \to 0, 1 \to 1$).
    \begin{equation*}
      \Delta_i(0) < 0 \quad \text{and} \quad \Delta_i(1) > 0
    \end{equation*}
  \item \emph{Inverter:} The player negates the predecessor's
    strategy ($0 \to 1, 1 \to 0$).
    \begin{equation*}
      \Delta_i(0) > 0 \quad \text{and} \quad \Delta_i(1) < 0
    \end{equation*}
  \end{enumerate}
\end{definition}

This classification yields a complete topological resolution of the game.

\begin{theorem}[Uniqueness via Dominance]
  \label{thm:dominance_uniqueness}
  Under Robust Incentives, if there exists at least one
  \emph{Dominant} player, the game admits a \emph{unique} Nash
  equilibrium, which is strictly pure.
\end{theorem}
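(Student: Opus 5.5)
The plan is to pin down the Dominant player first and then let Proposition \ref{prop:propagation} carry the cycle. Let player $d$ be Dominant, so $\Delta_d(0)\cdot\Delta_d(1)>0$. By Lemma \ref{lem:incentive_regimes} (Dominance), the affine function $\Delta_d(p_{d-1})$ keeps a fixed non-zero sign for every $p_{d-1}\in[0,1]$. Proposition \ref{prop:best_response} then gives, in any Nash equilibrium $p^*$, the value $p_d^*=\bar s_d$, where $\bar s_d=1$ if $\Delta_d(0)>0$ and $\bar s_d=0$ otherwise. This holds whatever $p_{d-1}^*$ is, mixed or pure. So every equilibrium fixes player $d$ at the same pure action.

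Next I propagate forward. Since $p_d^*$ is pure, the Robust Incentive Structure gives $\Delta_{d+1}(p_d^*)\neq 0$. By Proposition \ref{prop:propagation}, $p_{d+1}^*$ is the unique pure strategy fixed by the sign of $\Delta_{d+1}(\bar s_d)$. By finite induction along $d\to d+1\to\cdots\to d-1$ (indices mod $N$), every $p_i^*$ is forced to a value $\bar s_i$ that depends only on the payoff data. This gives \emph{uniqueness}: any equilibrium must equal the profile $\bar s$ constructed this way. In particular, Total Indifference is excluded, which is consistent with Theorem \ref{thm:system_purity}, because player $d$ can never be strictly mixed.

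For \emph{existence}, I check that $\bar s$ is indeed an equilibrium. For $i\neq d$, $\bar s_i$ was defined as the strict best response to $\bar s_{i-1}$, so condition \eqref{eq:pure_nash} holds. The only delicate point is closing the cycle: $\bar s_{d-1}$ is produced at the end of the chain and feeds back into player $d$. Here dominance is exactly what is needed, since $\bar s_d$ is a strict best response to both $s_{d-1}=0$ and $s_{d-1}=1$, hence to $\bar s_{d-1}$. So no consistency condition arises, unlike the parity obstruction in non-dominant cycles.

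I expect the main obstacle to be stating the uniqueness argument over mixed profiles correctly. The key is to start from player $d$, whose response is pure for \emph{every} mixed input, rather than to assume purity somewhere. After that, the cycle-closure check is immediate.
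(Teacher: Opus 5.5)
Your proposal is correct and follows the paper's own argument: fix the Dominant player's action, propagate it around the ring via the Deterministic Propagation proposition under robustness, and observe that dominance makes the cycle closure automatic. You are more careful than the paper's terse proof in two places: you explicitly rule out mixed profiles, since the Dominant player's best response is pure for every mixed input, and you separate the uniqueness and existence steps.
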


\begin{proof}
  Let player $k$ be a Dominant node. Their optimal strategy $p_k^* \in
  \{0,1\}$ is invariant. Since incentives are robust, $p_k^*$ uniquely
  determines $p_{k+1}^*$ via Proposition \ref{prop:propagation}. This
  determination cascades through the ring ($k \to k+1 \to \dots \to
  k-1$). Cycle consistency is \emph{ensured} by definition: since
  player $k$'s choice is unconditional, the equilibrium condition
  holds trivially regardless of the feedback from $k-1$, preventing
  any logical frustration.
\end{proof}

If no player is Dominant, the system operates as a closed feedback
loop composed exclusively of Conformists and Inverters.

\begin{theorem}[Parity Condition]
  \label{thm:parity_condition}
  Consider a game with Robust Incentives where no player is
  Dominant. Let $\kappa$ denote the number of \emph{Inverter} players
  in the cycle.
  \begin{itemize}
  \item \emph{Even Parity ($\kappa$ is even):} There exist exactly
    \emph{two} pure strategy Nash equilibria.
  \item \emph{Odd Parity ($\kappa$ is odd):} There exists \emph{no}
    pure strategy Nash equilibrium.
  \end{itemize}
\end{theorem}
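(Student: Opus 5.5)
The plan is to encode each player's pure best response as an affine map over $\mathbb{Z}_2$ and reduce the existence of a pure equilibrium to a single consistency condition around the cycle.

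\textbf{Step 1: the local response maps.} With Robust Incentives and no Dominant player, Definition \ref{def:taxonomy} says every player is either a Conformist or an Inverter. By Proposition \ref{prop:propagation}, in any pure equilibrium player $i$'s strategy is uniquely forced by $s_{i-1}$. Write $s_i = f_i(s_{i-1})$, where $f_i(x)=x$ for a Conformist and $f_i(x)=1-x$ for an Inverter. Over $\mathbb{Z}_2$ this reads $f_i(x) = x \oplus \epsilon_i$, with $\epsilon_i=1$ exactly when $i$ is an Inverter.

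\textbf{Step 2: characterize pure equilibria.} I will show that a profile $s\in S$ is a pure Nash equilibrium if and only if $s_i = f_i(s_{i-1})$ for all $i\in I$ (indices mod $N$).
\begin{itemize}
\item Necessity: by Proposition \ref{prop:best_response}, since $\Delta_i(s_{i-1})\neq 0$, the unique best response is $f_i(s_{i-1})$.
\item Sufficiency: if every player plays $f_i(s_{i-1})$, each is strictly best responding, so condition \eqref{eq:pure_nash} holds.
\end{itemize}

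\textbf{Step 3: parametrize by one seed.} Fix player $1$'s strategy $s_1=x\in\{0,1\}$. The relations $s_i=f_i(s_{i-1})$ for $i=2,\dots,N$ determine $s_2,\dots,s_N$ uniquely, giving
\[
s_j = x \oplus \bigoplus_{i=2}^{j}\epsilon_i \quad \text{for } j=2,\dots,N.
\]
The only remaining condition is closure at player $1$, namely $s_1 = f_1(s_N)$. This becomes
\[
x = x \oplus \bigoplus_{i=1}^{N}\epsilon_i = x \oplus (\kappa \bmod 2).
\]

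\textbf{Step 4: count.} The closure condition does not depend on $x$.
\begin{itemize}
\item If $\kappa$ is even, it holds for both $x=0$ and $x=1$. This yields exactly two profiles, and they are distinct because they differ at player $1$ (indeed they are complements of each other).
\item If $\kappa$ is odd, it fails for both seeds, so there is no pure equilibrium.
\end{itemize}
Since every pure equilibrium has some value of $s_1$ and is then uniquely determined by Step 3, the count is exact.

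\textbf{Main obstacle.} The delicate point is the exactness claim, which needs both directions of Step 2: no equilibrium may be missed and none double counted. The ``if and only if'' there, together with the uniqueness of propagation from Proposition \ref{prop:propagation}, settles it. The rest is just parity bookkeeping: composing the maps $x\mapsto x\oplus\epsilon_i$ around the cycle gives $x\mapsto x\oplus\kappa$, which is the identity or a fixed-point-free involution.
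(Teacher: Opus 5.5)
Your proposal is correct and follows the paper's proof: both reduce pure equilibria to fixed points of the composed cycle return map of the Conformist/Inverter responses, which is the identity when $\kappa$ is even and a negation when $\kappa$ is odd. Your explicit if-and-only-if in Step 2, together with seeding at $s_1$ and closing through $f_1(s_N)$, makes the exactness of the count more transparent than the paper does. The paper asserts the fixed-point correspondence without spelling out both directions.
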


\begin{proof}
  Since incentives are robust and dominance is absent, the best
  response function $B_i: \{0,1\} \to \{0,1\}$ is a bijection for
  every player:
  \begin{equation*}
    B_i(x) = \begin{cases}
      x & \text{if } i \text{ is Conformist} \\
      1-x & \text{if } i \text{ is Inverter}
    \end{cases}
  \end{equation*}
  
  We define the \emph{Cycle Return Map} $\Phi: \{0,1\} \to \{0,1\}$ as
  the composition of all local responses around the ring: $\Phi(x) =
  B_N(\dots B_2(B_1(x))\dots)$. A pure strategy equilibrium
  corresponds strictly to a fixed point $x^* = \Phi(x^*)$.
  
  Since each Inverter introduces a logical negation, the nature of $\Phi(x)$ depends on the
  parity of $\kappa$:
  \begin{itemize}
  \item \emph{Case $\kappa$ is Even:} The negations cancel out,
    yielding the identity map $\Phi(x) = x$. This fixed-point equation holds for
    both $x=0$ and $x=1$. Thus, seeding the cycle with $p_1=0$ and $p_1=1$ generates
    two distinct valid equilibria.
  \item \emph{Case $\kappa$ is Odd:} The net effect is a negation,
    yielding $\Phi(x) = 1-x$. The equation $x = 1-x$ has no solution in the
    binary set $\{0,1\}$. Thus, logical consistency is impossible (Cycle Frustration), and
    no pure equilibrium exists.
  \end{itemize}
\end{proof}

\section{Induced Indifference and Backward Propagation}
\label{sec:mixed_strategies}

While Section \ref{sec:pure_strategies} established that pure strategy
equilibria rely on the \emph{forward propagation} of deterministic
constraints ($i \to i+1$), the existence of mixed strategies requires
the \emph{neutralization} of these constraints.

In a binary framework, non-zero boundary incentives
($\Delta_i(p_{i-1}) \neq 0$) compel a boundary choice, precluding
mixing. Consequently, a player $i$ adopts a strictly mixed strategy
$p_i \in (0,1)$ if and only if the marginal incentive vanishes. We
distinguish two topological mechanisms that generate this state,
yielding fundamentally different causal implications:
\begin{enumerate}
\item \emph{Structural Indifference ($\Delta_i(0) = \Delta_i(1) =
0$):} The node exhibits invariant incentives regardless of the
  input. This creates a \textit{Topological Decoupling}: player $i$
  acts as an informational sink, absorbing the predecessor's strategy
  with no effect on local optimality. Consequently, this indifference
  imposes no constraint on player $i-1$.

\item \emph{Induced Indifference (Active Cancellation):} The player
  possesses strict preferences but is neutralized by the
  predecessor. Here, the condition $\Delta_i(p_{i-1}) = 0$ imposes a
  \textit{binding constraint} on player $i-1$. This triggers
  \textit{Backward Propagation}, where the necessity for mixing at
  node $i$ strictly dictates the strategy of node $i-1$.
\end{enumerate}

\subsection{The Geometry of Induced Indifference}

We first analyze the case of active cancellation. Absent structural
indifference, the condition for player $i$ to mix necessitates a
precise strategic input from the predecessor.

\begin{lemma}[The Critical Predecessor Strategy]
  \label{lem:induced_indifference}
  Assume player $i$ exhibits non-constant marginal incentives
  ($\Delta_i(0) \neq \Delta_i(1)$). A feasible critical strategy
  $p^*_{i-1} \in [0,1]$ that renders player $i$ indifferent exists if
  and only if the boundary incentives have non-matching signs
  ($\Delta_i(0) \cdot \Delta_i(1) \le 0$). This strategy is unique and
  defined by:
  \begin{equation}
    p^*_{i-1} = \frac{\Delta_i(0)}{\Delta_i(0) - \Delta_i(1)}.
    \label{eq:critical_p}
  \end{equation}
  Furthermore, $p^*_{i-1}$ represents a strictly mixed strategy
  ($p^*_{i-1} \in (0,1)$) if and only if the inequality is strict
  ($\Delta_i(0) \cdot \Delta_i(1) < 0$).
\end{lemma}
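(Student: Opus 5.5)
The argument uses only the affine form \eqref{eq:incentive_affine} of the marginal incentive. Write $a=\Delta_i(0)$ and $b=\Delta_i(1)$, with $a\neq b$ by hypothesis. The indifference condition $\Delta_i(p_{i-1})=0$ reads $a + p_{i-1}(b-a)=0$. Since the slope $b-a$ is nonzero, this linear equation has exactly one real solution, $p^*_{i-1}=a/(a-b)$. That gives both uniqueness and the explicit formula \eqref{eq:critical_p} at once, with no need to invoke the Intermediate Value Theorem. What remains is to decide when this unique real root lies in the feasible set $[0,1]$, and when it lies in $(0,1)$.

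For feasibility, I will translate $0\le a/(a-b)\le 1$ into a sign condition on $a$ and $b$, splitting on the sign of $a-b$.
\begin{itemize}
\item If $a-b>0$, the two inequalities become $0\le a$ and $a\le a-b$, that is $a\ge 0$ and $b\le 0$.
\item If $a-b<0$, multiplying through reverses them, giving $a\le 0$ and $b\ge 0$.
\end{itemize}
In both cases $a$ and $b$ are weakly of opposite sign, so $ab\le 0$. Conversely, suppose $ab\le 0$ and $a\neq b$. Either one of $a,b$ is zero, or they have strictly opposite signs. Either way the same case analysis runs backwards and places the root in $[0,1]$. This proves the ``if and only if'' for existence.

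For the strict refinement, I repeat the computation with strict inequalities. The condition $0<a/(a-b)<1$ is equivalent, in either sign case of $a-b$, to $a$ and $b$ being nonzero and of opposite sign, that is $ab<0$. Equivalently, the root equals $0$ exactly when $a=0$ and equals $1$ exactly when $b=0$. This matches the Boundary Indifference case of Lemma~\ref{lem:incentive_regimes}.

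The only delicate step is the sign bookkeeping when clearing the denominator $a-b$, in particular handling the degenerate cases $a=0$ or $b=0$ correctly. Everything else is immediate from Lemma~\ref{lem:incentive_regimes} and \eqref{eq:incentive_affine}.
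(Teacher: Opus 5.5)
Your proof is correct and follows the same route as the paper. In both, non-constancy of the affine map $\Delta_i$ gives the unique root $a/(a-b)$, and that root lies in $[0,1]$ (resp.\ $(0,1)$) exactly when the boundary incentives change sign weakly (resp.\ strictly). The only difference is that you locate the root by explicit inequality bookkeeping on the sign of $a-b$, whereas the paper appeals to the Intermediate Value Theorem (``zero must lie between $\Delta_i(0)$ and $\Delta_i(1)$''); your version has the small advantage of making both directions of each equivalence explicit, including the boundary cases $a=0$ and $b=0$.
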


\begin{proof}
  The proof follows directly from the affine properties of the
  marginal incentive established in Eq. (\ref{eq:incentive_affine})
  and the classification in Lemma \ref{lem:incentive_regimes}.
  
  By the assumption that player $i$ is not structurally indifferent,
  $\Delta_i$ is a non-constant affine function. The critical strategy
  $p^*_{i-1}$ corresponds to the unique root of $\Delta_i(p_{i-1})=0$,
  explicitly derived in Eq. (\ref{eq:critical_threshold}).
  
  The feasibility of this probability is governed by the boundary
  values:
  \begin{itemize}
  \item For $p^*_{i-1}$ to exist within the closed unit interval
    $[0,1]$, zero must lie between the boundary incentives
    $\Delta_i(0)$ and $\Delta_i(1)$ (by the Intermediate Value Theorem
    applied to an affine map). This requires $\Delta_i(0) \cdot
    \Delta_i(1) \le 0$.
  \item For $p^*_{i-1}$ to constitute a \textit{strictly mixed}
    strategy ($p^*_{i-1} \in (0,1)$), the root must be strictly
    interior. This is satisfied if and only if the boundaries possess
    strictly opposite signs, i.e., $\Delta_i(0) \cdot \Delta_i(1) <
    0$.
  \end{itemize}
\end{proof}

\subsection{Fully Mixed Nash Equilibria in Robust Systems}

We restrict our analysis to Robust Systems where no player possesses a
dominant strategy. Theorem \ref{thm:system_purity} established that
under these conditions, any Nash equilibrium is either fully pure or
fully mixed. Having characterized the pure strategy profiles in
Section \ref{sec:pure_strategies}, we now focus on the existence and
uniqueness of the fully mixed equilibrium.

\begin{theorem}[Uniqueness of the Fully Mixed Equilibrium]
  \label{thm:mixed_equilibrium}
  Consider a game with a Robust Incentive Structure and no dominant strategies:
  \begin{equation}
    \Delta_i(0) \cdot \Delta_i(1) < 0, \qquad \forall i\in I.
  \end{equation}
  There exists a \emph{unique} Nash equilibrium in strictly mixed
  strategies $p^* = (p_1^*, \dots, p_N^*)$, where $p_i^* \in
  (0,1)$ for all $i$. This equilibrium is constructed
  deterministically via a backward constraint:
  \begin{equation}
    p_i^* = \frac{\Delta_{i+1}(0)}{\Delta_{i+1}(0) - \Delta_{i+1}(1)}, \quad \forall i \in \{1, \dots, N\}
    \label{eq:mixed_algo}
  \end{equation}
  where indices are taken modulo $N$ (identifying $N+1 \equiv 1$).
\end{theorem}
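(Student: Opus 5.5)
The plan is to split the claim into existence and uniqueness. Both parts reduce, via Lemma~\ref{lem:response_structure} and Lemma~\ref{lem:induced_indifference}, to the observation that in a fully mixed profile each player's mixing constrains \emph{only} its predecessor.

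\emph{Necessity (uniqueness).} Let $p^*$ be any Nash equilibrium with $p_i^*\in(0,1)$ for all $i$.
\begin{itemize}
\item Fix $i$. Since $p_{i+1}^*\in(0,1)$, Lemma~\ref{lem:response_structure} forces $\Delta_{i+1}(p_i^*)=0$.
\item The hypothesis $\Delta_{i+1}(0)\cdot\Delta_{i+1}(1)<0$ implies $\Delta_{i+1}(0)\neq\Delta_{i+1}(1)$, so $\Delta_{i+1}$ is a non-constant affine function.
\item By Lemma~\ref{lem:induced_indifference}, its root in $[0,1]$ is unique and equals the expression in \eqref{eq:mixed_algo}.
\item Hence each $p_i^*$ is pinned down by the data of player $i+1$ alone. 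Because every index $i$ has exactly one successor $i+1$ (taken mod $N$), the full vector $p^*$ is determined.
\end{itemize}
So there is at most one fully mixed equilibrium, and it must be given by \eqref{eq:mixed_algo}.

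\emph{Sufficiency (existence).} Define $p^*$ by \eqref{eq:mixed_algo}.
\begin{itemize}
\item By the strictness clause of Lemma~\ref{lem:induced_indifference}, each $p_i^*$ lies in $(0,1)$.
\item By construction, $\Delta_{i+1}(p_i^*)=0$ for every $i$. Equivalently, $\Delta_i(p_{i-1}^*)=0$ for every $i$, using the modular indexing $0\equiv N$.
\item Proposition~\ref{prop:best_response} then gives $\mathcal{B}_i(p_{i-1}^*)=[0,1]\ni p_i^*$. Every player is best-responding, so $p^*$ is a Nash equilibrium.
\end{itemize}

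The main point requiring care is cycle consistency. In the pure case (Theorem~\ref{thm:parity_condition}), forward propagation could produce frustration. Here no such issue arises, because the constraints are decoupled: the map $i\mapsto p_i^*$ depends only on $\mathbf{H}_{i+1}$, never on other $p_j^*$. So the system of $N$ equations $\Delta_{i+1}(p_i)=0$ is diagonal rather than a genuine loop.

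I will make explicit that $H_i$ depends on $p_i$ only through the term $p_i\Delta_i(p_{i-1})$ in \eqref{eq:payoff_decomposition}. Consequently, $p_i$ enters no indifference condition except that of player $i+1$. This closes the argument for every $N\ge 2$ and both parities of $\kappa$.
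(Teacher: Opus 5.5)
Your proposal is correct and follows essentially the same argument as the paper: indifference of player $i+1$ forces $p_i^*$ to be the unique root of the non-constant affine map $\Delta_{i+1}$, and the strict sign change places that root in $(0,1)$. Because each $p_i^*$ depends only on $\mathbf{H}_{i+1}$, the system decouples and uniqueness follows. The only differences are minor. You get the indifference requirement directly from Lemma~\ref{lem:response_structure} rather than through Theorem~\ref{thm:system_purity}. You also verify existence explicitly via Proposition~\ref{prop:best_response}, a step the paper leaves implicit.
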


\begin{proof}
  By Theorem \ref{thm:system_purity}, a fully mixed equilibrium $p^*$
  requires every player to exhibit payoff indifference. Since the
  Robust Incentive assumption precludes Structural Indifference
  ($\Delta_i \not\equiv 0$), this indifference must be actively
  induced via the mechanism established in Lemma
  \ref{lem:induced_indifference}.
  
  For any player $i+1$ to be indifferent, the predecessor $i$ is
  uniquely constrained to play the critical strategy $p_i^*$. The
  absence of dominance implies a Global Switching Regime
  ($\Delta_{i+1}(0) \cdot \Delta_{i+1}(1) < 0$ for all $i$). By Lemma
  \ref{lem:induced_indifference}, this strict sign change guarantees
  that the critical strategy lies strictly within the unit interval
  ($p_i^* \in (0,1)$).
  
  Since each $p_i^*$ is determined independently by the parameters of
  player $i+1$ (Eq. \ref{eq:mixed_algo}), the system of equations is
  decoupled, and the equilibrium profile is unique.
\end{proof}

\begin{remark}[Coexistence and Stability]
  It is crucial to note that Theorem \ref{thm:mixed_equilibrium}
  asserts the uniqueness of the \textit{mixed} equilibrium, not the
  uniqueness of the equilibrium \textit{set}. As shown in Theorem
  \ref{thm:parity_condition} (Parity Condition), if the number of
  inverters is even, this unique mixed equilibrium coexists with two
  pure strategy equilibria. In such cases, the mixed equilibrium
  represents a precarious "balancing act" of payoff indifference,
  whereas the pure equilibria represent stable waves of deterministic
  propagation.
\end{remark}

\section{Taxonomy of Equilibrium Structures and Complexity Reduction}
\label{sec:taxonomy}

Synthesizing the analytical results from Sections
\ref{sec:pure_strategies} and \ref{sec:mixed_strategies}, we establish
a unified taxonomy for $N$-player binary games. We demonstrate that
the interplay between local boundary incentives ($\Delta_i(0),
\Delta_i(1)$) and global topology (specifically, the verification of
\emph{cycle consistency} to ensure the absence of logical frustration,
formalized as \emph{Cycle Parity} strictly within robust regimes)
governs the cardinality and stability of the equilibrium
set. Furthermore, for systems exhibiting structural singularities
(deviations from robustness), the framework provides a mechanism to
decompose the cycle into tractable linear segments, preserving $O(N)$
solvability.

This classification serves a dual purpose: it provides a rigorous map
of the solution space and formally establishes the reduction of
computational complexity from PPAD-complete to deterministic linear
time.

\subsection{Regime I: The Robust Solvability Matrix}
In Robust Systems (Definition \ref{def:robust_structure}),
characterized by non-zero boundary incentives
($\Delta_i(0)\cdot\Delta_i(1) \neq 0$), the analysis bifurcates based
on the presence of Dominance and the Parity of Inverters. Table
\ref{tab:equilibrium_taxonomy} summarizes the exhaustive
classification established in Theorems \ref{thm:dominance_uniqueness},
\ref{thm:parity_condition}, and \ref{thm:mixed_equilibrium}:
\begin{itemize}
\item \emph{The Deterministic Anchor (Case A):} If at least one player
  possesses a dominant strategy, the feedback loop is functionally
  decoupled. This node acts as a \textit{Deterministic Source},
  compelling the subsequent $N-1$ players into a unique pure
  trajectory via Proposition \ref{prop:propagation}. Mixed strategies
  are impossible because the source unconditionally emits a boundary
  strategy, never inducing indifference in the successor.
  
\item \emph{The Parity Switch (Case B):} In the absence of dominance,
  the existence of pure equilibria is strictly determined by the
  logical consistency of the sequence of local responses. Even parity
  ensures a consistent return map, supporting two distinct pure
  strategy waves (Theorem \ref{thm:parity_condition}). Odd parity
  induces \emph{Cycle Frustration}, a logical contradiction that
  precludes pure strategies, forcing the system into the unique mixed
  equilibrium characterized by the induced payoff indifference
  described in Eq. \eqref{eq:mixed_algo}.
\end{itemize}

\begin{table}[htbp]
  \centering
  \caption{Global Solvability Matrix for Robust Cyclic Games. The
    solution structure is fully determined by local dominance and the
    global parity of differentiating agents ($\kappa$).}
  \label{tab:equilibrium_taxonomy}
  \small
  \begin{tabular}{p{.28\textwidth}p{.27\textwidth}p{.11\textwidth}p{.11\textwidth}}
    \toprule
    \emph{Structural Condition} & \emph{Topological Effect} & \multicolumn{2}{c}{\emph{Equilibrium Set Cardinality}} \\ \cmidrule(l){3-4} 
    &  & \emph{Pure} & \emph{Mixed} \\ \midrule
    \emph{A. Dominance} & \emph{Linearization} & 1 & 0 \\
    ($\exists k: \Delta_k(0)\cdot\Delta_k(1) > 0$) & Cycle acts as a Chain ($k \to k+1$) & {\tiny (Unique)} & {\tiny (Impossible)} \\ \midrule
    \emph{B. Global Switching} & \emph{Closed Feedback} &  &  \\
    ($\forall i: \Delta_i(0)\Delta_i(1) < 0$) &  &  &  \\
    \emph{B.1 Even Parity} & Bistability & \textbf{2} & \textbf{1} \\
    ($\kappa$ even)&  & {\tiny (Attractors)} & {\tiny (Unstable)} \\
    \emph{B.2 Odd Parity} & Frustration & \textbf{0} & \textbf{1} \\
     ($\kappa$ odd) &  & {\tiny (None)} & {\tiny (Unique)} \\ \bottomrule
  \end{tabular}
\end{table}

\subsection{Regime II: Degeneracy and Topological Singularities}

Relaxing the robustness assumption ($\exists k \in I :
\Delta_k(0)\cdot\Delta_k(1) = 0$) introduces \textit{Topological
  Singularities}. These nodes function not merely as unconditional
``circuit breakers'' (as in Structural Indifference), but also as
\textit{conditional bifurcation points} (in cases of Boundary
Indifference).

In the latter scenario, the topological decoupling is state-dependent:
the node becomes a source of indifference (a bifurcation) strictly
when induced by a specific predecessor strategy. Consequently, the
existence of such equilibria is not guaranteed locally but requires a
constructive verification of the global feedback consistency to ensure
the cycle creates the necessary triggering condition.

To efficiently construct the equilibrium set, we establish an
\emph{Operational Hierarchy}, distinguishing between unconditional
decoupling and hypothesis-driven verification.

\subsubsection{Priority 1: Unconditional Decouplers}

We define an \emph{Unconditional Decoupler} as a node $k$ where the
set of optimal strategies is invariant to the predecessor's input
$p_{k-1}$. These nodes do not physically sever the cycle, but they
functionally block the propagation of strategic constraints. By
ignoring the information contained in the predecessor's state for
decision-making purposes, they serve as axiomatic starting points to
resolve the cycle into linear sequences.

\begin{enumerate}
\item \emph{Dominance (Deterministic Source):} 
  \begin{itemize}
  \item \textit{Mechanism:} The marginal incentive maintains a
    constant non-zero sign ($\Delta_k(0) \cdot \Delta_k(1) >
    0$). Consequently, the optimal strategy $p_k^* \in \{0,1\}$ is
    strict and unique, regardless of the input $p_{k-1}$.
  \item \textit{Topological Effect:} The node acts as a
    \emph{Deterministic Source}. It effectively filters out upstream
    variability and initiates a deterministic propagation segment ($k
    \to k+1 \dots$) downstream.
  \end{itemize}

\item \emph{Structural Indifference (Entropic Source):} Arises when
  the marginal incentive vanishes globally ($\Delta_k \equiv 0$).
  \begin{itemize}
  \item \textit{Mechanism:} The player's payoff may still depend on
    the predecessor (shifting the utility baseline), but the
    \emph{marginal gain} from switching strategies is identically
    zero. Therefore, the optimal decision is decoupled from the input:
    the player is unconstrained and $p_k$ becomes a free parameter in
    $[0,1]$.
  \item \textit{Topological Effect:} The node functions as an
    \emph{Entropic Source}. It halts the propagation of upstream
    constraints and initiates a new solution manifold parameterized by
    $p_k$, which the algorithm must propagate downstream.
  \end{itemize}
\end{enumerate}

\subsubsection{Priority 2: Conditional Decouplers (Bifurcation Analysis)}
If no unconditional decoupler exists, the cycle structure relies on
nodes with \emph{Boundary Indifference}, creating a topological
bifurcation. By Lemma \ref{lem:dichotomy}, any equilibrium involving
node $k$ must fall into exactly one of two states: Indifference or
Determinism. We verify both branches exhaustively:

\begin{enumerate}
\item \emph{Branch A: The Indifference State ($\Delta_k = 0$).}
  \begin{itemize}
  \item \textit{Condition:} Requires the predecessor to adopt the
    trigger boundary strategy $p_{k-1}=p_{trigger}$ (the unique root
    of $\Delta_k$).
  \item \textit{Analysis:} Node $k$ becomes a free parameter $p_k \in
    [0,1]$. We propagate this parameter downstream, mirroring the
    logic of \emph{Structural Indifference}.
  \item \textit{Closure:} Since downstream nodes may also exhibit
    conditional indifference, the propagation generates a
    \emph{Solution Tree}. A valid equilibrium exists for any pair of
    parameter $p_k^*$ and trajectory branch where the cycle
    consistently returns the trigger strategy (i.e., the predecessor
    is either compelled or permitted to play $p_{trigger}$).
  \end{itemize}

\item \emph{Branch B: The Deterministic State ($\Delta_k \neq 0$).}
  \begin{itemize}
  \item \textit{Condition:} Applies whenever the predecessor operates
    in the \emph{non-trigger domain} ($p_{k-1} \neq p_{trigger}$).
  \item \textit{Analysis:} Node $k$ is compelled to adopt a unique
    pure strategy $s_{response} \in \{0,1\}$. This strategy is
    invariant for any input in this domain. We propagate
    $s_{response}$ downstream as a fixed source.
  \item \textit{Closure:} Valid equilibria exist if the return chain
    \emph{does not contradict} the non-trigger
    condition. Specifically, we verify that the cycle does not compel
    the predecessor to revert to the specific trigger strategy
    $p_{trigger}$. Any other return state (mixed or complementary
    pure) validates the equilibrium.
  \end{itemize}
\end{enumerate}

\textit{Operational Outcome:} Since Lemma \ref{lem:dichotomy}
precludes any third state, the union of equilibria found in Branch A
and Branch B constitutes the complete solution set.

\subsection{Algorithmic Resolution Procedure}
The resolution procedure is structured hierarchically. We first verify
the global incentive properties to determine if the system forms a
Robust Loop. If topological singularities exist, we proceed to
decompose the cycle via unconditional or conditional decouplers.

\begin{algorithm_logic}[Hierarchical Cycle Resolution]\label{alg:algorithm}\
  \begin{enumerate}
  \item \emph{State I: Robust Incentives Verification.}  Check if the
    system possesses a Robust Incentive Structure ($\forall i,
    \Delta_i(0)\cdot\Delta_i(1) \neq 0$).
    \begin{itemize}
    \item \emph{Step 1.1: Dominance Check.}  Does any player exhibit
      Dominance ($\Delta_k(0)\cdot\Delta_k(1) > 0$)?
      \begin{itemize}
      \item \textit{Result:} If Yes, there exists a \emph{Unique Pure
      Equilibrium}. The dominant node sets the state, triggering a
        deterministic cascade.
      \end{itemize}
    \item \emph{Step 1.2: Parity Check (No Dominance).}  If no
      dominance exists, the system is a closed loop of switching
      agents. Calculate $\kappa$ (number of Inverters).
      \begin{itemize}
      \item \emph{Even Parity:} Two pure equilibria (Bistability)
        and one fully mixed.
      \item \emph{Odd Parity:} Zero pure equilibria (Frustration) and
        one unique fully mixed.
      \end{itemize}
    \end{itemize}
  \item \emph{State II: Non-Robust Systems (Topological
  Decomposition).}  If the system is not Robust ($\exists i,
    \Delta_i(0)\cdot\Delta_i(1) = 0$), the cycle contains
    singularities.
    
    \begin{itemize}
    \item \emph{Step 2.1: Unconditional Decouplers.}  Scan for nodes
      that sever the feedback loop regardless of the input.
      \begin{itemize}
      \item \textit{Targets:} Identify nodes with \emph{Dominance}
        (even if local) OR \emph{Structural Indifference} ($\Delta_i
        \equiv 0$).
      \item \textit{Action:} These nodes act as Sources. Cut the cycle and construct the equilibrium by concatenating the linear segments generated by forward propagation.
      \end{itemize}
      
    \item \emph{Step 2.2: Conditional Bifurcation (No Decouplers).}
      If no unconditional decouplers exist, the system relies on
      \emph{Boundary Indifference}.
      \begin{itemize}
      \item \textit{Action:} Select a singular node $k$ exhibiting
        singularity at a boundary and propagate two branches:
        \begin{enumerate}
        \item \emph{Hypothesis A:} Predecessor triggers indifference.
        \item \emph{Hypothesis B:} Predecessor does not trigger
          (Deterministic response).
        \end{enumerate}
      \item \textit{Closure:} Verify a coherent return for each
        branch. The solution set is the union of valid profiles.
      \end{itemize}
    \end{itemize}
  \end{enumerate}
\end{algorithm_logic}

\section{Illustrative Applications: Inventory Synchronization in Circular Supply Chains}
\label{sec:applications}

To validate our theoretical framework within a core Operations
Research context, we analyze a \emph{Strategic Inventory Policy
Game}. While classical inventory models often optimize continuous
quantities, strategic network design requires discrete choices between
contrasting operational regimes (e.g., Push vs. Pull, Lean vs. Agile)
\cite{CachonNetessine2004}.

We model a decentralized Closed-Loop Supply Chain (CLSC) of $N=4$
agents (e.g., Manufacturer $\to$ Retailer $\to$ Collector $\to$
Recycler $\to$ Manufacturer). In this sequential game, the inventory
policy of an upstream partner ($i-1$) strictly conditions the supply
risk and holding costs of the downstream agent ($i$).

\subsection{Problem Formulation}
Each agent $i$ selects a steady-state inventory strategy $s_i \in \{0,
1\}$:
\begin{itemize}
\item \emph{Strategy 0 (Lean Policy / JIT):} The agent minimizes
  inventory buffers to reduce \emph{Holding Costs} and working
  capital, prioritizing cost efficiency over availability.
\item \emph{Strategy 1 (Bulk Policy / Safety Stock):} The agent
  maintains high inventory levels to maximize \emph{Service Level} and
  hedge against supply volatility, prioritizing availability over
  capital efficiency.
\end{itemize}

\subsection{Agent Taxonomy}
We define five archetypal agent profiles. Each maps a specific
operational constraint to one of the topological classes identified in
our theoretical framework (Definition \ref{def:taxonomy} and Lemma
\ref{lem:incentive_regimes}).

\begin{enumerate}
\item \emph{The Market Stabilizer ($A_{inv}$):} A distributor
  practicing counter-cyclical management (Strategic Substitutes).
  \begin{itemize}
  \item \emph{Logic:} If Upstream is Lean (0), scarcity drives margins
    up $\to$ Adopt \emph{Bulk} (1). If Upstream is Bulk (1),
    saturation risks obsolescence $\to$ Adopt \emph{Lean} (0).
  \item \emph{Matrix:} $H_{inv} = \begin{pmatrix} 2 & 5 \\ 4 &
    1 \end{pmatrix}$. ($\Delta(0)=+3, \Delta(1)=-3$).
  \item \emph{Class: Inverter} (Switching Agent).
  \end{itemize}
  
\item \emph{The Integrated Partner ($A_{conf}$):} A manufacturer in a
  synchronized JIT ecosystem (Strategic Complements).
  \begin{itemize}
  \item \emph{Logic:} If Upstream is Lean (0), the agent lacks raw
    materials for high-volume production $\to$ Forced \emph{Lean} (0)
    to avoid line starvation costs. If Upstream is Bulk (1), supply
    abundance allows economies of scale $\to$ Adopt \emph{Bulk} (1).
  \item \emph{Matrix:} $H_{conf} = \begin{pmatrix} 4 & 1 \\ 2 & 5 \end{pmatrix}$. ($\Delta(0)=-3, \Delta(1)=+3$).
  \item \emph{Class: Conformist} (Synchronization Agent).
  \end{itemize}

\item \emph{The Constrained Warehouse ($A_{dom}$):} An agent with
  severe physical storage limitations (e.g., an urban kiosk).
  \begin{itemize}
  \item \emph{Logic:} Adopting \emph{Bulk} (1) necessitates expensive
    external leasing. Under Lean supply, this incurs wasteful rental
    fees ($\Delta(0) < 0$); under Bulk supply, it triggers severe
    capital lock-up and obsolescence risks ($\Delta(1) \ll
    0$). Strategy \emph{Lean} (0) is strictly dominant.
  \item \emph{Matrix:} $H_{dom} = \begin{pmatrix} 3 & 1 \\ 3 & -5 \end{pmatrix}$. ($\Delta < 0$ always).
  \item \emph{Class: Strict Dominance} (Decoupler).
  \end{itemize}

\item \emph{The Logistics Hub ($A_{ind}$):} A cross-docking
  facility. Revenue derives strictly from flow volume (transaction
  fees), not from holding margins.
  \begin{itemize}
  \item \emph{Logic:} The agent's utility is highly sensitive to the
    upstream volume (Input 0 vs 1), but the operational cost of
    declaring \emph{Lean} or \emph{Bulk} is identical ($C \approx
    0$). Thus, the agent is operationally indifferent to its own
    label, breaking the strategic dependency.
  \item \emph{Matrix:} $H_{ind} = \begin{pmatrix} 1 & 1 \\ 3 &
    3 \end{pmatrix}$. ($\Delta(0)=0, \Delta(1)=0$).
  \item \emph{Class: Structural Indifference} (Entropic Source, Decoupler).
  \end{itemize}

\item \emph{The Flow-Constrained Retailer ($A_{bound}$):} An agent
  relying on Just-in-Time fulfillment to minimize holding costs.
  \begin{itemize}
  \item \emph{Logic:}
    \begin{itemize}
    \item \emph{Input 0 (Scarcity):} If Upstream is Lean, inventory
      accumulation is physically impossible due to lack of
      flow. Strategies 0 and 1 yield identical low outcomes
      ($\Delta(0)=0$).
    \item \emph{Input 1 (Abundance):} If Upstream is Bulk, supply is
      guaranteed. The agent strictly prefers \emph{Lean} (0) to
      minimize costs, effectively using the supplier's inventory as a
      buffer. Adopting \emph{Bulk} (1) would incur redundant storage
      costs for no additional sales gain ($\Delta(1) < 0$).
    \end{itemize}
  \item \emph{Matrix:} $H_{bound} = \begin{pmatrix} 2 & 2 \\ 5 & 1 \end{pmatrix}$.
  \item \emph{Class: Boundary Indifference} (Conditional Decoupler).
  \end{itemize}  
\end{enumerate}

\subsection{Scenario 1: Robust Dominance (The Reverse Logistics Loop)}

We model a heterogeneous recycling chain where physical constraints
dictate the system's state. The cycle consists of:
\begin{itemize}
\item \emph{Node 2 ($A_{dom}$):} An urban collection point with no
  storage capacity (Bottleneck).
\item \emph{Node 3 ($A_{inv}$):} A regional consolidator managing
  supply buffers.
\item \emph{Node 4 ($A_{conf}$):} A processing plant seeking economies
  of scale.
\item \emph{Node 1 ($A_{inv}$):} A distributor sensitive to market
  saturation.
\end{itemize}
\begin{equation*}
  \text{Topology:} \quad 1(A_{inv}) \to 2(A_{dom}) \to 3(A_{inv}) \to 4(A_{conf}) \to 1
\end{equation*}

The system exhibits a \emph{Robust Incentive Structure} with a
Dominant node ($A_{dom}$). By \emph{Theorem
\ref{thm:dominance_uniqueness}}, this guarantees a \emph{unique pure
strategy equilibrium}. Unlike iterative algorithms that might cycle
indefinitely, our topological framework resolves the equilibrium in
linear time by propagating the deterministic constraint from the
source (Proposition \ref{prop:propagation}):

\begin{equation*}
  \underbrace{s_2^*=0}_{\substack{\text{Bottleneck} \\ \text{Constraint}}} 
  \xrightarrow[\text{Scarcity}]{A_{inv}} 
  \underbrace{s_3^*=1}_{\substack{\text{Buffer} \\ \text{Accumulation}}} 
  \xrightarrow[\text{Abundance}]{A_{conf}} 
  \underbrace{s_4^*=1}_{\substack{\text{Mass} \\ \text{Production}}} 
  \xrightarrow[\text{Saturation}]{A_{inv}} 
  \underbrace{s_1^*=0}_{\substack{\text{Risk} \\ \text{Mitigation}}}
\end{equation*}

The system linearizes into the unique profile $s^* = (0, 0, 1, 1)$.

\textit{Operational Insight:} The capacity constraint at the
collection point (Node 2) acts as a "Dictator," structurally
stabilizing the entire supply chain. Policy interventions (e.g.,
subsidies) targeting the factory (Node 4) would be ineffective, as its
high-production state is a deterministic consequence of the
bottleneck's dominance.

\subsection{Scenario 2: Robust Loop (The Bullwhip Stabilizer)}

We analyze a homogeneous supply ring of four \emph{Market Stabilizers
($A_{inv}$)}. Every agent acts as a strategic substitute, creating a
closed feedback loop with no dominance.
\begin{equation*}
  \text{Topology:} \quad A_{inv} \to A_{inv} \to A_{inv} \to A_{inv} \to 1
\end{equation*}

The system possesses a Robust Incentive Structure with an even number
of inverters ($\kappa=4$). Applying our theoretical framework yields
the complete set of equilibria:
\begin{enumerate}
\item \emph{Pure Strategies:} By Theorem \ref{thm:parity_condition}
  (Even Parity), the cycle supports two distinct stable waves:
  \begin{equation*}
    s_A^* = (0, 1, 0, 1) \quad \text{and} \quad s_B^* = (1, 0, 1, 0)
  \end{equation*}
\item \emph{Mixed Strategies:} By Theorem \ref{thm:mixed_equilibrium},
  a unique fully mixed equilibrium exists where agents randomize to
  induce downstream indifference. Using Eq. \eqref{eq:mixed_algo} with
  inputs from $H_{inv}$:
  \begin{equation*}
    p^* = (0.5, 0.5, 0.5, 0.5)
  \end{equation*}
\end{enumerate}

To determine the operational viability of these solutions, we compute
the payoff vectors $\pi = (\pi_1, \pi_2, \pi_3, \pi_4)$ for each
profile:
\begin{itemize}
\item For the pure waves $s_A^*$ and $s_B^*$, agents alternate between
  maximizing margin ($H(0,1)=5$) and minimizing risk ($H(1,0)=4$).
  \begin{equation*}
    \pi(s_A^*) = (4, 5, 4, 5) \quad \text{and} \quad \pi(s_B^*) = (5, 4, 5, 4)
  \end{equation*}
\item For the mixed profile $p^*$, the expected payoff is strictly
  limited by the indifference condition:
  \begin{equation*}
    \pi(p^*) = (3, 3, 3, 3)
  \end{equation*}
\end{itemize}

\textit{Operational Insight:} Comparing the payoff tuples reveals that
the mixed equilibrium is \emph{strictly Pareto-dominated} by either
pure configuration ($\pi(p^*) \ll \pi(s_{pure}^*)$). Consequently,
although the mixed profile constitutes a theoretically valid
equilibrium, it lacks \emph{operational viability}. Under standard
equilibrium selection criteria based on payoff dominance, rational
profit-maximizing agents will inevitably coordinate on a deterministic
wave, relegating the mixed equilibrium to an unstable transient state.

\subsection{Scenario 3: Structural Indifference (The Control Tree)}

We examine a chain initiated by a \emph{Logistics Hub ($A_{ind}$)}
followed by three \emph{Market Stabilizers ($A_{inv}$)}. This topology
represents a supply chain driven by a volume-agnostic 3PL provider.
\begin{equation*}
  \text{Topology:} \quad 1(A_{ind}) \to 2(A_{inv}) \to 3(A_{inv}) \to 4(A_{inv}) \to 1
\end{equation*}
Agent 1 ($A_{ind}$) exhibits Structural Indifference ($\Delta_1 \equiv
0$); consequently, its strategy $p_1 \in [0,1]$ serves as an
unconstrained exogenous parameter. Given that the subsequent player
(Agent 2) operates under Robust Incentives without dominance
($A_{inv}$), $p_1$ functions as a control mechanism capable of
inducing three distinct topological states downstream:
\begin{enumerate}
\item \emph{Deterministic Bulk:} If $p_1 < 0.5$, Agent 2 is compelled
  to play \emph{Bulk} ($s_2=1$), locking the downstream sequence into
  a fixed pure trajectory.
\item \emph{Deterministic Lean:} If $p_1 > 0.5$, Agent 2 is compelled
  to play \emph{Lean} ($s_2=0$), locking the sequence into the
  complementary pure trajectory.
\item \emph{Induced Indifference:} If $p_1 = 0.5$, Agent 2 becomes
  indifferent. This grants Agent 2 strategic degrees of freedom ($p_2
  \in [0,1]$), generating a \emph{Solution Tree}.
\end{enumerate}

Since the logic for subsequent agents applies recursively, the
complete equilibrium manifold is exhaustively mapped in Table
\ref{tab:cascade_map}.

\begin{table}[htbp]
  \centering
  \caption{System Response Map (Equilibrium Manifold) for Scenario 3}
  \label{tab:cascade_map}
  \begin{tabular}{cccc}
    \toprule
    Agent 1 ($p_1$) & Agent 2 ($p_2$) & Agent 3 ($p_3$) & Agent 4 ($p_4$) \\
    \midrule
    $[0, 0.5)$ & 1 & 0 & 1 \\
    $(0.5, 1]$ & 0 & 1 & 0 \\
    \midrule
    $0.5$ & $[0, 0.5)$ & 1 & 0 \\
    $0.5$ & $(0.5, 1]$ & 0 & 1 \\
    \midrule
    $0.5$ & $0.5$ & $[0, 0.5)$ & 1 \\
    $0.5$ & $0.5$ & $(0.5, 1]$ & 0 \\
    \midrule
    $0.5$ & $0.5$ & $0.5$ & $[0,1]$ \\
    \bottomrule
  \end{tabular}
\end{table}

\textit{Operational Insight (Structural Agility):} This hierarchy
offers a mechanism for \emph{Agile Network Control}. By initializing
the source at the critical value ($p_1=0.5$), the central planner
places the system in a state of maximum entropy ("Full Cascade"),
preserving \emph{degrees of freedom} for downstream agents. This
grants the chain the flexibility to absorb unmodeled externalities: if
Node 3 faces a sudden demand shock, it can unilaterally "collapse" the
wave into a deterministic response without violating the global
equilibrium conditions established by the predecessor. Thus, the
indifference cascade acts as a buffer of \emph{deferred commitment},
allowing the network to remain fluid until a specific constraint
requires rigidity.

\subsection{Scenario 4: Boundary Indifference (Algorithmic Resolution)}

We address a non-robust configuration containing a
\emph{Flow-Constrained Retailer ($A_{bound}$)} at Node 2. This agent
introduces a topological singularity: its behavior is deterministic
under abundance but indifferent under scarcity.

\begin{equation*}
  \text{Topology:} \quad 1(A_{inv}) \to 2(A_{bound}) \to 3(A_{inv}) \to 4(A_{inv}) \to 1
\end{equation*}

Since Node 2 exhibits a singularity at Input 0 ($\Delta_2(0)=0$), we
apply \emph{Algorithm \ref{alg:algorithm}} to resolve the bifurcation
by hypothesizing the predecessor's state (Node 1) and verifying the
coherence of the resulting propagation.

\begin{enumerate}
\item \emph{Branch A: The Indifference Hypothesis.} Assume the
  predecessor plays the trigger strategy $s_1=0$.
  \begin{itemize}
  \item \emph{Consequence:} Node 2 becomes a free parameter $p_2 \in
    [0,1]$.
  \item \emph{Verification:} This freedom generates a propagation
    tree. We validate only those branches where the cycle consistently
    returns the trigger value $s_1=0$ (or induces indifference
    allowing it). Table \ref{tab:scenario4_check} maps the valid
    domains.
  \end{itemize}
  
  \begin{table}[htbp]
    \centering
    \caption{Consistency Verification for Hypothesis $s_1=0$}
    \label{tab:scenario4_check}
    \begin{tabular}{c c c c c}
      \toprule
      \textbf{Node 2 ($p_2$)} & \textbf{Node 3 ($p_3$)} & \textbf{Node 4 ($p_4$)} & \textbf{Node 1 (Return)} & \textbf{Status} \\
      \midrule
      $[0, 0.5)$ & 1 & 0 & 1 & \emph{Invalid} \\
      $(0.5, 1]$ & 0 & 1 & \textbf{0} & \textbf{Valid} \\
      \midrule
      $0.5$ & $[0, 0.5)$ & 1 & \textbf{0} & \textbf{Valid} \\
      $0.5$ & $(0.5, 1]$ & 0 & 1 & \emph{Invalid} \\
      \midrule
      $0.5$ & $0.5$ & $[0, 0.5)$ & 1 & \emph{Invalid} \\
      $0.5$ & $0.5$ & $(0.5, 1]$ & \textbf{0} & \textbf{Valid} \\
      \midrule
      $0.5$ & $0.5$ & $0.5$ & \textbf{0} (via Indiff.) & \textbf{Valid} \\
      \bottomrule
    \end{tabular}
  \end{table}
  
\item \emph{Branch B: The Deterministic Hypothesis.} Assume the
  predecessor operates in the non-trigger domain ($s_1 \in (0, 1]$).
    \begin{itemize}
    \item \emph{Consequence:} Since the input is strictly non-zero,
      Node 2 faces a negative incentive ($\Delta_2 < 0$) and is
      compelled to adopt \emph{Lean} ($s_2=0$).
    \item \emph{Verification:} This deterministic response propagates
      through the chain of inverters:
      \begin{equation*}
        s_1 \in (0, 1] \xrightarrow{} s_2=0 \xrightarrow{A_{inv}} s_3=1 \xrightarrow{A_{inv}} s_4=0 \xrightarrow{A_{inv}} \mathbf{s_1=1}
      \end{equation*}
    \item \emph{Result:} The propagation process selects $s_1=1$ as
      the unique fixed point. The only valid profile in this branch is
      $s_B^* = (1, 0, 1, 0)$.
    \end{itemize}
\end{enumerate}

The system exhibits \emph{Regime Asymmetry}. Under scarcity ($s_1=0$),
the solution set expands into a \emph{Hierarchical Solution Tree}
allowing for pure, hybrid, or mixed configurations (as detailed in
Table \ref{tab:scenario4_check}). Under abundance ($s_1=1$), the
solution collapses into a single deterministic point.

\textit{Operational Insight (Strategic Plasticity vs. Structural
  Locking):} This result highlights a fundamental property of supply
chains with boundary singularities. The "Lean" regime ($s_1=0$) is
characterized by \emph{Strategic Plasticity}: the system can absorb
local variations (entropy) from Agent 2 without breaking the
equilibrium loop. Conversely, the "Bulk" regime ($s_1=1$) is
characterized by \emph{Structural Locking}: any deviation is instantly
corrected by the feedback loop, forcing the system back to the rigid
pure trajectory. Thus, the Lean state is operationally richer but
potentially more complex to manage, while the Bulk state is simpler
but rigid.

\section{Conclusions}
\label{sec:conclusions}

This research provides a closed-form characterization of equilibrium
structures in $N$-player non-cooperative games governed by
unidirectional local dependencies. Addressing the computational
intractability (PPAD-completeness) inherent to general network games,
we exploit the topological modularity of the dyadic cycle to derive a
deterministic resolution framework with linear time complexity $O(N)$.

Crucially, this reduction to binary strategies and sequential
constraints represents a \emph{methodological distillation}, not a
trivialization. As evidenced by the Supply Chain applications, complex
operational decisions frequently crystallize into discrete regimes
(e.g., Lean vs. Bulk) driven by strict upstream dependencies. By
abstracting from the noise of generalized connectivity, our
topological approach isolates the fundamental causal mechanics that
remain opaque in black-box numerical solvers.

Our analysis classifies the equilibrium landscape into two distinct
topological regimes:
\begin{enumerate}
\item \emph{Regime I: Robust Incentives (The Generic Operational
State).}  In systems where marginal incentives are strictly non-zero
  at the boundaries --representing the standard economic condition of
  strict preferences-- the global equilibrium structure is determined
  by the presence of dominance:
  \begin{itemize}
  \item \emph{Dominance (Linearization):} We proved that local
    physical constraints (e.g., storage bottlenecks) act as "Circuit
    Breakers" that functionally sever the feedback loop. This
    transforms the cyclic problem into a deterministic forward
    propagation, guaranteeing a unique pure strategy equilibrium
    (Theorem \ref{thm:dominance_uniqueness}).
  \item \emph{Feedback Loops (Parity):} In the absence of dominance,
    stability is strictly governed by the \emph{Parity of Inverters}
    (Theorem \ref{thm:parity_condition}), determining the existence of
    exactly two or zero pure strategy equilibria. Interestingly, a
    unique fully mixed equilibrium is structurally guaranteed (Theorem
    \ref{thm:mixed_equilibrium}). Notably, our case study reveals that
    this mixed profile is strictly Pareto-dominated by the pure
    strategy equilibria.
  \end{itemize}

\item \emph{Regime II: Topological Singularities (Degenerate Cases).}
  When incentives vanish locally, the system admits degrees of freedom
  that require specific algorithmic resolution:
  \begin{itemize}
  \item \emph{Structural Indifference (Inverse Design):} Agents with
    invariant incentives (e.g., Logistics Hubs) serve as control
    parameters. A central planner can tune these source nodes to
    induce controlled flexibility (entropy) throughout the chain,
    enabling "Agile" network configurations.

  \item \emph{Boundary Indifference (Bifurcation):} Agents with
    conditional incentives (e.g., Contracted Retailers) require a
    \emph{Hypothesis-Verification} analysis. Our case study highlights
    a "Regime Asymmetry" where specific boundary conditions (e.g.,
    scarcity) enable strategic plasticity (multiple equilibria),
    whereas complementary conditions (e.g., abundance) enforce
    structural locking (unique deterministic equilibrium).
  \end{itemize}
\end{enumerate}

From a managerial perspective, this framework offers rigorous tools
for the structural design of circular supply chains. It enables
practitioners to pinpoint the local parameters required to stabilize
oscillating networks or to engineer specific load-balancing profiles
via \emph{Inverse Design}. Future research should extend this
topological approach to bidirectional dependencies ($i$ depends on
both $i-1$ and $i+1$) and non-binary strategy spaces, investigating
the boundaries of deterministic propagation beyond the dyadic binary
case.

\section*{Disclosure of interest}

The authors report there are no competing interests to declare.

\section*{Declaration of Generative AI and AI-assisted technologies in the writing process}

During the preparation of this work, the authors used Gemini in order
to improve the readability, language quality, and structural flow of
the manuscript. After using this tool/service, the authors reviewed
and edited the content as needed and take full responsibility for the
content of the publication.

\section*{CRediT authorship contribution statement}

\emph{José-María Sánchez-Sáez:} Conceptualization, Methodology, Formal
analysis, Writing -- original draft.

\emph{Nana Odishelidze:} Methodology, Formal analysis, Validation,
Writing -- review \& editing.

\emph{Francisco Criado-Aldeanueva:} Investigation, Validation,
Visualization, Writing -- review \& editing, Project administration.

%\bibliographystyle{elsarticle-num}
%\bibliography{criad006}

\end{document}